\title{Covariant Guiding Laws for Fields} % on Spacelike Slices of Spacetime}
\author[M. Derakhshani]{Maaneli Derakhshani} 
\author[M.K.-H.~Kiessling]{Michael K.-H.\ Kiessling} 
\author[A.S.~Tahvildar-Zadeh]{A.\ Shadi Tahvildar-Zadeh} 
\address{Department of Mathematics, Rutgers, The State University of New Jersey, 
110 Frelinghuysen Rd., Piscataway, NJ 08854-8019, United States of America}
\email{\href{mailto:shadi@math.rutgers.edu}{shadi@math.rutgers.edu } {md1485@math.rutgers.edu} {miki@math.rutgers.edu}}
\date{orig. October 19, 2021; this rev. Feb. 12, 2023}
\numberwithin{equation}{section}
\theoremstyle{plain}
\newtheorem{theorem}{Theorem}[section]
\newtheorem{proposition}[theorem]{Proposition}
\theoremstyle{definition}
\newtheorem{example}[theorem]{Example}
\theoremstyle{remark}
\newtheorem{remark}[theorem]{Remark}
\def\RR{\mathbb{R}}
\def\SS{\mathbb{S}}
\let\oldmarginpar\marginpar
\renewcommand\marginpar[1]{\-\oldmarginpar[\raggedleft\footnotesize #1]%
{\raggedright\footnotesize #1}}
\newcommand{\cV}{{\mathcal V}}
\newcommand{\refeq}[1]{(\ref{#1})}
\newcommand{\Aset}{{\mathbb A}}
\newcommand{\Bset}{{\mathbb B}}
\newcommand{\Cset}{{\mathbb C}}
\newcommand{\Rset}{{\mathbb R}}
\newcommand{\bJ}{\mathbf{J}}
\newcommand{\bK}{\mathbf{K}}
\newcommand{\bn}{\mathbf{n}}
\newcommand{\bq}{\mathbf{q}}
\newcommand{\bS}{\mathbf{S}}
\newcommand{\bv}{\mathbf{v}}
\newcommand{\bX}{\mathbf{X}}
\newcommand{\bY}{\mathbf{Y}}
\newcommand{\bseq}{\begin{subequation}}
\newcommand{\eseq}{\end{subequation}}
\newcommand{\dal}{\,\raisebox{3pt}{\fbox{}}\,}
\newcommand{\bu}{\mathbf{u}}
\newcommand{\beq}{\begin{equation}}
\newcommand{\eeq}{\end{equation}}
\newcommand{\tr}{\mathrm{tr}\,}
\newcommand{\p}{\partial}
\newcommand{\eps}{\epsilon}
\newcommand{\cA}{\mathcal{A}}
\newcommand{\cB}{\mathcal{B}}
\newcommand{\cF}{\mathcal{F}}
\newcommand{\cC}{\mathcal{C}}
\newcommand{\cD}{\mathcal{D}}
\newcommand{\cH}{{\mathcal H}}
\newcommand{\cK}{{\mathcal K}}
\newcommand{\cL}{{\mathcal L}}
\newcommand{\cM}{{\mathcal M}}
\newcommand{\cN}{{\mathcal N}}
\newcommand{\cP}{{\mathcal P}}
\newcommand{\cS}{{\mathcal S}}
\newcommand{\cT}{\mathcal{T}}
\newcommand{\el}{\ell}
\newcommand{\la}{\lambda}
\newcommand{\La}{\Lambda}
\newcommand{\de}{\delta}
\newcommand{\De}{\Delta}
\newcommand{\al}{\alpha}
\newcommand{\ga}{\gamma}
\newcommand{\ep}{\epsilon}
\newcommand{\om}{\omega}
\newcommand{\Om}{\Omega}
\newcommand{\si}{\sigma}
\newcommand{\Si}{\Sigma}
\newcommand{\nab}{\nabla}
\newcommand{\half}{\frac{1}{2}}
\newcommand{\diag}{\mbox{diag}}
\newcommand{\bna}{\begin{eqnarray}}
\newcommand{\ena}{\end{eqnarray}}
\newcommand{\bea}{\begin{eqnarray*}}
\newcommand{\eea}{\end{eqnarray*}}
\newcommand{\ben}{\begin{enumerate}}
\newcommand{\een}{\end{enumerate}}
\newcommand{\bi}{\begin{itemize}}
\newcommand{\ei}{\end{itemize}}
\newcommand{\stz}[1]{{\textcolor{black}{#1}}}
\newcommand{\newstz}[1]{{\textcolor{black}{#1}}}
\newcommand{\stznew}[1]{{\textcolor{black}{#1}}}
\newcommand{\shadibs}{\textcolor{black}{\mathbf{s}}}
\newcommand{\der}[1]{\textcolor{black}{#1}}
\newcommand{\miki}[1]{\textcolor{black}{#1}}
\begin{document}

\begin{abstract}
After reviewing what is known about the passage from the classical Hamilton--Jacobi formulation of non-relativistic point-particle dynamics to the non-relativistic
quantum dynamics of point particles whose motion is guided by a wave function that satisfies Schr\"odinger's or Pauli's equation, we study the analogous question for
the Lorentz-covariant dynamics of fields on spacelike slices of spacetime. We establish a relationship, between the DeDonder--Weyl--Christodoulou formulation of covariant Hamilton--Jacobi equations for the classical field evolution, and the Lorentz-covariant Dirac-type wave equation proposed by Kanatchikov amended by \stz{our proposed}  
guiding equation for such fields. We show that Kanatchikov's equation is well-posed and generally solvable, and we establish the correspondence between plane-wave solutions of Kanatchikov's equation 
and solutions of the covariant Hamilton--Jacobi equations of  DeDonder--Weyl--Christodoulou. 
\stz{ We propose a 
covariant guiding law for the temporal evolution of fields defined on constant time slices of spacetime, and show that it yields, at each spacetime point, the existence of a finite measure on the space of field values at that point 
that is equivariant with respect to the flow induced \stznew{by} the solution of Kanatchikov's equation %on \miki{the space of generic fields} 
that is guiding the actual field,} \stznew{so long as it is a plane-wave solution.}
We show that \stz{our guiding law} is local in the sense of Einstein's special relativity, \stz{and therefore it cannot be used to analyze Bell-type experiments.}
\stz{W}e conclude by suggesting directions to be explored in future research.

\begin{center}
{\em Dedicated to Demetrios Christodoulou on the occasion of his 70th birthday}
\end{center}
\end{abstract}
\maketitle

\vspace{-.5truecm}
\section{Introduction and \stz{Statement of Main Results}}

  Action principles have played a crucial role in the formulation of fundamental physical theories. 
  Originally invented for classical mechanics and subsequently generalized to classical field theory, as recalled and advanced in Christodoulou's monograph \cite{C1}, 
classical action principles have also been a point of departure for arriving at quantum mechanics and quantum field theory through some kind of ``quantization'' procedure.
  In this paper we explore a quantization procedure that could establish a link 
between the Hamilton--Jacobi formalism of DeDonder--Weyl--Christodoulou in their formulation of Lorentz-covariant classical 
field theory, and Kanatchikov's \cite{Kan99} covariant quantum wave equation on 
\miki{a space of generic fields}, 
amended by our \stz{proposed} guiding equation 
for the evolution of actual fields on Minkowski spacetime. 

  Since quantum physics cannot be deduced from classical physics (at best, it should be the other way round), any so-called quantization of a classical theory 
is simply a concise summary of ad-hoc rules of procedure that lead from a classical system of equations to a system of quantum-physical equations 
after it has been discovered that these reproduce {\em some} empirical data for certain types of physical systems. 
  It is hard to imagine that any such ad-hoc procedure could have been proposed based on compelling reasoning, without the empirical success of the 
mix of partly ingenious, partly serendipitous, heuristic quantum physical proposals by Planck, Einstein, Bohr, de Broglie, Heisenberg, Born, Schr\"odinger, Dirac, etc.
 Even after-the-fact ``quantization procedures'' do not appear compelling in themselves;
Ed Nelson once summarized it thus: ``First quantization is a mystery, but second quantization is a functor.'' (Quoted in \cite{ReedSimon}, p.207).
 Yet once confirmed as a successful recipe, one may want to apply such a recipe to other classical systems and see what happens.
 That's precisely what many physicists have been doing for decades.
  Our paper may be perceived in this tradition.
  
  One reason for the ``mystery'' of orthodox (first) quantization procedures is that they change the mathematical meaning of the symbols used in the classical formulations, as a result of which {\em their deeper physical meaning has remained obscure} (paraphrasing Max Born).
  More to the point, it is not at all clear what those theories obtained by {\em canonical quantization}, and variations on its theme, say about nature in itself. 
 
  Yet not all quantization procedures are created equal. 
  
  Essentially in parallel with {\em canonical quantization}, an alternative approach had been conceived of, which does not suffer from such criticism,
for it in no way leaves the physical meaning of the resulting model obscure.
  For nonrelativistic quantum mechanics, this quantization procedure harks back to work of de Broglie in the mid-1920s; see in particular his presentation 
at the 1927 Solvay conference \cite{deBroglieSOLVAY}. 
  After being sidelined for 25 years, de Broglie's model was subsequently rediscovered and further developed by Bohm \cite{Bohm}, then Bell \cite{Bell}, and in the recent past it was advanced conceptually and mathematically rigorously by D\"urr, Goldstein, Zangh\`{\i} and their associates \cite{Equivar92}, 
\cite{DGMZ1999}, \cite{DT2009},  \cite{DGZbook}, \cite{CanBMrel14}.
  In a nutshell, the classical Newtonian dynamics of point particles is changed into a non-Newtonian dynamics of point particles.
  It is completely clear what this non-Newtonian theory of point particle motion, known as the de Broglie--Bohm theory, says about nature: {\em Matter consists of
  fundamental particles; their  locations in space are represented by mathematical points that move as time goes on}. 
  Since the transition from the Hamilton--Jacobi formulation of Newtonian classical point particle mechanics to the de Broglie--Bohm theory of (non-relativistic)
quantum dynamics of point particles can be effected by a technically minor (though conceptually radical) deformation of an action principle associated with the Hamilton--Jacobi theory, without losing the original meaning of the mathematical symbols, this transition has been called 
{\em least invasive quantization} \cite{K2}.
 In this sense it is also the ``least mysterious quantization.'' 
   
  We will briefly recall the key steps of the {\em least invasive quantization} procedure of Newtonian mechanics in the next section.
     After our condensed review we then turn to the main objective of this paper and inquire into a field analog of the least invasive quantization procedure of particle
 dynamics.
  More precisely, we investigate whether the DeDonder--Weyl--Christodoulou \cite{C1} Hamilton--Jacobi formulation of  {Lorentz-covariant} classical 
theory of fields $\phi: \cM \to \cN$ defined on a spacetime $\cM$ and taking values in a target space $\cN$ can be deformed into 
a quantum version analogously to what can be done for non-relativistic particle dynamics.
 A crucial issue is Christodoulou's integrability condition.
 Earlier investigations in this direction \cite{Freistadt}, \cite{Nik05}, \cite{Villani} were oblivious of it, but otherwise carried out in a similar spirit to ours.
 
 We identify a candidate quantum wave equation, namely Kanatchikov's Lorentz-covariant Dirac-type wave equation \cite{Kan99} on what Christodoulou termed 
configuration space $\cC := \cM \times \cN$ of the fields\stz{
\footnote{\stz{Note that this terminology is not the same as the usual meaning of ``field configuration space" in QFT.} \der{There, ``field configuration space" refers to the space of all possible configurations of a field on a spacelike hypersurface, and wave functions on ``field configuration space" are functionals of fields on spacelike hypersurfaces.}}}.
 We show that Kanatchikov's equation is well-posed and generally solvable.
 Then we establish the correspondence between plane-wave solutions of Kanatchikov's equation with solutions of the covariant Hamilton--Jacobi equations of  DeDonder--Weyl--Christodoulou in the classical limit. 
\stz{In particular, we prove the following result, stated informally here and 
\miki{explained} in subsequent sections of this article:} 
\vspace{-.5truecm}

\stz{
\begin{theorem}\label{thm:conv}
In the limit as the parameter $\lambda$ (analogous to $\hbar$) in Kanatchikov's equation \refeq{eq:Kanat} goes to zero, plane wave solutions of this equation, for a quantum wavefunction $\Psi$ guiding a Lorentz-covariant scalar field, satisfy the same equation as the plane wave solutions of the DeDonder-Weyl-Christodoulou  equations (\ref{eq:SFS},\ref{eq:SFw},\ref{eq:SFrho}), for the covariant Hamilton-Jacobi theory of the corresponding classical scalar field.
\end{theorem}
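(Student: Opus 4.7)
The plan is to adapt the standard WKB-type classical-limit argument to the Dirac-type structure of Kanatchikov's equation \refeq{eq:Kanat} on $\cC = \cM \times \cN$, exploiting the fact that ``plane wave'' here means an eikonal with a phase that is \emph{linear} in the base coordinates $x^\mu$ and in the target coordinate $\phi$. Concretely, I would begin by writing $\Psi(x,\phi) = u(x,\phi)\,e^{iS(x,\phi)/\lambda}$, where for a plane wave we may take $u$ constant (a fixed spinor-valued amplitude) and $S(x,\phi) = k_\mu x^\mu - \pi\,\phi$ with constants $k_\mu$ and $\pi$ to be related to the DeDonder--Weyl momenta. The scaling by $\lambda$ is dictated by the role $\lambda$ plays as the Planck-constant analog in \refeq{eq:Kanat}.

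Next I would substitute this ansatz into Kanatchikov's equation. Using $\lambda\,\partial_\mu \Psi = (i k_\mu u + \lambda\,\partial_\mu u)\,e^{iS/\lambda}$ and the analogous identity for $\partial_\phi \Psi$, the equation organizes itself into a polynomial in $\lambda$. Since $u$ is chosen constant for the plane-wave case, only the $\lambda^0$ order survives, and one obtains a purely algebraic matrix equation of schematic form
\begin{equation}
\bigl(\gamma^\mu k_\mu \;-\; \gamma^{\sharp}\,\pi \;-\; H(x,\phi)\,\ONE\bigr)\,u \;=\; 0,
\end{equation}
where $\gamma^\mu,\gamma^{\sharp}$ are the Clifford-type generators appearing in \refeq{eq:Kanat} and $H$ is the DeDonder--Weyl--Christodoulou Hamiltonian for the scalar field. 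A nontrivial amplitude $u$ forces the vanishing of the characteristic determinant, which, by direct computation using the Clifford relations obeyed by $\gamma^\mu$ and $\gamma^{\sharp}$, yields precisely the polynomial dispersion relation associated with the DWC Hamilton--Jacobi system (\ref{eq:SFS}, \ref{eq:SFw}, \ref{eq:SFrho}) under the identifications $\partial_\mu S \leftrightarrow k_\mu$ and $\partial S/\partial\phi \leftrightarrow \pi$.

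To complete the argument I would verify that the plane-wave sector of the DWC system --- in which the phase is linear in $(x,\phi)$ --- reduces to the same algebraic constraint, so that the two dispersion relations coincide as polynomial equations in $(k_\mu,\pi)$. This two-sided comparison establishes the stated correspondence in the $\lambda\to 0$ limit. The remaining $\lambda^1$ terms (relevant only away from the strict plane-wave ansatz) would reproduce the continuity/transport equation of the amplitude along the DWC characteristics; they are not required for the statement of Theorem \ref{thm:conv}, but their vanishing on constant amplitudes provides a consistency check.

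The main obstacle I anticipate is bookkeeping, not conceptual: one must handle the multi-component, Dirac-type nature of $\Psi$ and the hybrid base-plus-target gradient in \refeq{eq:Kanat} with enough care that the Clifford identities cleanly convert the determinantal condition on $u$ into the scalar Hamilton--Jacobi polynomial of DWC. Once the algebra is organized so that $\gamma^\mu k_\mu - \gamma^{\sharp}\pi$ squares to $(k_\mu k^\mu - \pi^2)\ONE$ (or the appropriate analog), the identification with the DeDonder--Weyl--Christodoulou plane-wave equations follows by matching coefficients.
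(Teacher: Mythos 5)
There is a genuine gap: your proposal rests on a misreading of what ``plane wave'' means in this setting, and on a wrong structural ansatz for $\Psi$. In the paper, a plane-wave solution is one in which the \emph{direction} $u^\mu = S^\mu/|\bS|$ is a fixed timelike unit vector $n^\mu$, so that the $x$-dependence enters only through $\sigma := n\cdot x$; crucially, both the phase magnitude $\zeta = f(\sigma,q)$ and the amplitude $\rho = g(\sigma,q)$ remain \emph{arbitrary} functions of $\sigma$ and of the target coordinate $q$. You instead take the amplitude constant and the phase linear in $(x,\phi)$. That class is essentially trivial: already in the paper's harmonic-oscillator example (with $V = \tfrac12\omega^2 q^2$) the plane-wave phase $f$ is quadratic in $q$ and $g$ is any initial probability density evolving nontrivially, so your ansatz misses nearly all plane waves. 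Moreover your schematic $\Psi = u\,e^{iS/\lambda}$ with scalar phase $S$ is not Kanatchikov's decomposition. His is a Clifford-exponential $\Psi = R\exp(i\gamma_\mu S^\mu/\lambda)$, which for the restriction $\Psi\in\Bset=\mathrm{span}\{I,\gamma^\mu\}$ expands as $\psi = R\cos(\zeta/\lambda)$, $\psi_\mu = iR\sin(\zeta/\lambda)u_\mu$; this is not a complex scalar phase acting on a fixed spinor, and there is no extra generator $\gamma^\sharp$ conjugate to the target gradient because the right-hand side $\hat H = -\tfrac{\lambda^2}{2}\Delta_q + V$ is a second-order scalar operator in $q$, not a Dirac-type operator. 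Consequently the vanishing-determinant argument does not produce a ``polynomial dispersion relation'' that matches the DWC Hamilton--Jacobi system, which is a system of PDEs in $(\sigma,q)$, not an algebraic constraint on $(k_\mu,\pi)$.

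The theorem also explicitly asserts agreement with \refeq{eq:SFrho}, the Christodoulou continuity equation, so the ``$\lambda^1$ terms'' you propose to discard as unnecessary are in fact part of the required conclusion. The paper's proof instead substitutes the plane-wave ansatz ($u^\mu\equiv n^\mu$, $\zeta = f(n\cdot x,q)$, $\rho = g(n\cdot x,q)$) directly into the Euler--Lagrange system (\ref{eq:zetaKan}--\ref{eq:uKan}) of the Kanatchikov action, obtains the $1+n$-dimensional quantum Hamilton--Jacobi and continuity equations
\begin{equation}
\p_\sigma f + \tfrac12\,\p_a f\,\p^a f + V - \tfrac{\lambda^2}{2}\frac{\Delta_q\sqrt g}{\sqrt g} = 0,
\qquad
\p_\sigma g + \p_a\!\left(g\,\p^a f\right) = 0,
\end{equation}
and observes that in the limit $\lambda\to 0$ these coincide with the plane-wave reduction (\ref{eq:redf}, \ref{eq:redg}) of the DWC system (\ref{eq:SFS}, \ref{eq:SFw}, \ref{eq:SFrho}); the integrability condition \refeq{eq:SFw} is automatically satisfied because $u^\mu$ is constant. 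To repair your argument you would need to abandon the linear-phase/constant-amplitude ansatz and the determinant route, and instead perform the Clifford-exponential Pr\"ufer reduction and then carry out the $\lambda\to 0$ comparison at the level of the resulting PDEs, including the continuity equation.
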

}

 This demonstrates that the DeDonder--Weyl--Christodoulou Hamilton--Jacobi equation captures the classical limit of Kanatchikov's wave equation for an important subset of solutions for which Christodoulou's integrability condition is satisfied.
  In follow-up work we hope to characterize their relationship in general.
  
 To be able to speak of a least invasively quantized Hamilton--Jacobi theory of fields, one also needs a corresponding guiding equation for these fields that evolves
 them from one spacelike slice to another, and that reduces to the Hamilton--Jacobi guiding equation in the classical limit.
 We propose a natural candidate for such a quantum guiding law and show that it yields \stz{
at each point $x$ on the spacelike slice, the existence of 
an $x$-dependent measure on the space of field values at that point
that is equivariant with respect to the flow induced by the guided field,}  \stznew{at least in the case of plane-wave solutions of Kanatchikov's equation:}

\begin{theorem}\label{thm:guiding}

\stznew{Let $\cM$ be the Minkowski spacetime and $\cN$ the space in which fields on $\cM$ take their values.}
\stznew{
\begin{enumerate}
\item
To every solution $\Psi$ of Kanatchikov's equation \refeq{eq:Kanat} there corresponds a foliation $\Sigma_t$ of $\cM$, a divergence-free vector field $\mathbf{Y} = (\mathbf{J},\mathbf{K})$ on the field configuration space $\cM\times\cN$, and 
for all $x\in \Si_t$ an $x$-dependent finite measure $\varrho_x$ 
on $\cN$ %that is {\em equivariant} with respect to the flow of $\mathbf{Y}$.
\newstz{with the following properties:}
\item  For fields $\phi:\cM\to\cN$ that are {\em guided} by $\mathbf{Y}$, i.e. that satisfy $\mathbf{J}\cdot \p \phi = \mathbf{K}$ on the graph of $\phi$, the flow on the configuration space induced by $\phi$ is embedded in the flow of $\mathbf{Y}$.
\item
When $\Psi$ is a plane-wave, the probability measure $\rho = \Psi^\dagger \Psi$ appearing in the usual statement of Born's Rule is simply the normalized version of $\varrho_x$, and it is equivariant in the usual sense of the word.
\end{enumerate}
}
\end{theorem}

We also show that this law is local in the sense of Einstein's special relativity. Our model therefore cannot be used to discuss Bell-type experiments that are known to violate Bell's inequality, something that one would expect a truly fundamental quantum theory of fields %to 
\stz{allow} one to do.
 
  The  DeDonder--Weyl--Christodoulou  Hamilton--Jacobi formulation of  {Lorentz-covariant} classical field
theory is recalled in section 3, and Kanatchikov's Lorentz-covariant Schr\"odinger equation and its relationship to the classical theory is investigated in section 4.   Our proposed guiding law for covariant fields is presented in section 5, and
in section 6 we conclude with a summary and an outlook.

  \section{Least Invasive Quantization of Newtonian Particle Mechanics}

 Our aim in this section is to explain least invasive quantization of Newtonian mechanics, in which a key role is played by the action principle.
 Thus we begin by recalling the main formulations of Newtonian mechanics that involve the action principle, and then we turn to the quantization step.
 
  The basic ideas can be explained by considering a single point particle with Newtonian inertial mass $m$ and given potential energy 
 $V = V(t,q)$, with $t\in\RR$ denoting an instant of time and $q \in \mathbb{R}^3$ a generic particle position in physical space.
    The generalization to the traditional many-body models with $N$ point particles interacting pairwise with various Newtonian forces is entirely straightforward.
   To go beyond the basic ideas we will also consider a charged, spinning point particle in $\RR^3$ that is acted on by external electromagnetic potentials $(\phi,{A})$.

     \subsection{Lagrange, Hamilton, and Hamilton--Jacobi}
  Our point of departure is the familiar Lagrange function
   $L(\,\cdot\,,\,\cdot\,,\,\cdot\,)  : \RR\times\RR^3\times\RR^3 \to \RR$ for the particle's motion, given by
\beq L(t,q,v) = \half m|v|^2 - V(t,q). \eeq
  Here,  $v\in T_q\RR^3\sim \RR^3$ denotes a generic velocity vector in the tangent space at $q$.
   An actual particle position at time $t$ will be denoted by $q(t)\in \RR^3$, the actual particle velocity at time $t$ is $\dot{q}(t)\in \RR^3$.
  For any differentiable path $t\mapsto q(t)$ that connects $q_1:=q(t_1)$ with  $q_2:=q(t_2)$ the corresponding action is
\beq \cA[q(\,\cdot\,)] := \int_{t_1}^{t_2} L(t,q(t),\dot{q}(t)) dt.\eeq
  We recall that the classical Newtonian motion is a stationary point of $\cA$ on the space of differentiable paths from $q_1$ to $q_2$.
  If the stationary point is  twice-continuously differentiable, it satisfies the Euler--Lagrange equation
 \beq \frac{d}{dt} \left(\frac{\p L (t,q(t),v)}{\p v}\Big|_{v=\dot{q}(t)}\right) - \frac{\p L (t,q,\dot{q}(t))}{\p q}\Big|_{q={q}(t)} =0 ,\eeq
or, explicitly, 
\beq  m \frac{d^2}{dt^2}q(t) = -  \frac{\p V(t,q)}{\p q}\Big|_{q={q}(t)}, \eeq
which is the familiar Newtonian equation of motion of a point particle in a conservative (though here time-dependent) force field, as found in introductory 
text books on classical mechanics.

 The path from the Lagrangian action principle to the Hamilton--Jacobi principle is via Hamilton's formulation.
 Defining the Hamiltonian to be the Legendre transform of $L$ as follows, 
 \beq\label{eq:HamLEGENDRE}
 H(t,q,p) := \max_v\{ \langle p,v\rangle - L(t,q,v)\} = \frac{1}{2m} |p|^2 + V(t,q), \eeq
where the angular parentheses denote duality pairing between tangent and co-tangent space at $q$.
 Here, (after, as usual for Euclidean space, identifying tangent and co-tangent spaces)  the generic canonical momentum $p$ emerges to be
\beq \label{def:p1}
p := \frac{\p \ell}{\p v} = m v.
\eeq
 Inverting the Legendre transform $L \mapsto H$ yields the map $H\mapsto L$, as
  \beq\label{eq:LagLEGENDRE}
 L(t,q,v) := \max_p\{ \langle p,v\rangle - H(t,q,p)\} = \frac{m}{2} |v|^2 - V(t,q);
 \eeq
now inserting this definition of $L$ into the Lagrangian action principle yields Hamilton's equations,
\beq\label{eq:HeqOFmot}
 \dot{q}(t) = \frac{\p H}{\p p}\Big|_{q={q}(t), p = p(t)} ,\qquad
\dot{p}(t) = - \frac{\p H}{\p q}\Big|_{q={q}(t), p=p(t)},
\eeq
which state that the velocity of the actual phase-space point $(q(t),p(t))$ at time $t$ is given by evaluating a generic phase-space velocity field 
$(\frac{\p H}{\p p}, - \frac{\p H}{\p q} )$ at that actual phase-space point.
 Understood in this way the system \refeq{eq:HeqOFmot} is a  {\em guiding law} on phase space, with a given guiding field.

 Hamilton's equations \refeq{eq:HeqOFmot}, written explicitly as
\beq\label{HeqOFmot}
 \dot{q}(t) = \frac{1 }{m}p(t)\qquad,\qquad
\dot{p} (t) =  f(t,q(t)), 
\eeq
here with $f(t,q) = - \p_q V(t,q)$, are a special conservative-forces case of Newton's equations of motion, who stated 
 \refeq{HeqOFmot} for more general forces $f(t,q,v)$. 
 This first-order system  \refeq{HeqOFmot} is manifestly equivalent to the second order formulation obtained from Lagrange's formalism.

 The Hamilton--Jacobi formulation emerges as a consequence of Hamilton's equations being form-invariant under canonical transformations to a new set 
 of variables $(Q,P)$, time-dependent functions of $(q,p)$.  
 This means that there are new variables
\beq\label{canon}
 Q =   Q(q,p,t)\qquad ,\qquad P =   P(q,p,t),
\eeq
 such that if $K$ denotes the Hamiltonian written in the new coordinates, then Hamilton's equations in the new coordinates read
\beq
 \dot{Q}(t) = \frac{\p K}{\p P}\Big|_{Q={Q}(t), P = P(t)} ,\qquad
\dot{P}(t) = - \frac{\p K}{\p Q}\Big|_{Q={Q}(t), P=P(t)}.
\eeq
Thus, $K$ is equivalent to $H$ --- in the sense that the corresponding Lagrangians $L$ and $\tilde{L}$ differ at most by a complete time derivative, i.e. along an actual motion we have
\beq
L - \tilde{L} =  \frac{d}{dt}\widehat{S}(t),
\eeq
where $\widehat{S}(t)= S(t,q(t),P(t),Q(t))$ for some function $S$ of $t$, $q$, $Q$, and $P$, evaluated at $q=q(t)$, $Q=Q(t)$, and $P=P(t)$. 
 This is because we can invert the canonical transformations \refeq{canon} to express $p = p(Q,P,t)$.

 The Hamilton--Jacobi formulation emerges when one realizes that the new coordinates can (in principle) be chosen such that $K \equiv 0$, 
 as a result of which $Q(t)$ and $P(t)$ will be constant: $\dot{Q} (t) = 0= \dot{P}(t)$.  
 We then have (suppressing the dependence of $S$ on $Q$ and $P$)
\beq
L(t) - \tilde{L}(t)  = \frac{\p S}{\p t} (t,q(t))+ \frac{\p S}{\p q}\Big|_{q=q(t)} \dot{q}(t),
\eeq
which can only be satisfied for all actual motions if, generically, 
\beq \label{eq:pH}
p = \frac{\p S}{\p q},\qquad -H = \frac{\p S}{\p t}.
\eeq
The second one of equations  \refeq{eq:pH} is the {\em Hamilton--Jacobi (partial differential) equation},
\beq 
{{\p_t S} + H(q, \nab_q S ,t) = 0}\,,
\eeq
a first-order nonlinear PDE for the {\em Hamilton--Jacobi phase function} $S(t,q) [= S(t,q; Q,P)$; here, $Q = (Q^1,Q^2,Q^3)$ and $P = (P^1,P^2,P^3)$ represent 
$6$ parameters].   $S$ is thus a function that depends on time $t$  and the generic position of the particle $q$. 
In the case of our particular example it's easy to see that the Hamilton--Jacobi equation has the form of an eikonal equation:
\beq\label{eq:HJeik}
\boxed{\p_t S  + \frac{1}{2m} |\nab_q S|^2 = -  V(t,q)}\,.
\eeq

 If more than one particle are present, say $N$, the phase function $S$ is a function of time $t\in \Rset$ and points $q\in \cC := \Rset^{3N}$,
 the {\em configuration space} of the $N$ point particles in $\RR^3$.

 Every solution  $S = S(t,q)$ of the Hamilton--Jacobi partial differential equation \refeq{eq:HJeik} gives rise to a 
$3$-parameter family of solutions of Hamilton's equations.  
 This is established through the method of characteristics, which requires solving the system of ODEs corresponding 
to the first of the Hamilton's equations 
 \refeq{HeqOFmot}, with the generic $p$ given by the first of equations \refeq{eq:pH}. 
 Thus, once a solution $S(t,q)$ for given initial data $S(0,q)$ has been obtained, the 
actual Newtonian particle motion with initial data $q(0)$ and $p(0) = \nab_q S(0,q)|_{q(0)}$ is obtained by solving 
\beq \label{eq:guiding}
\boxed{\dot{q}(t) = \frac{1}{m} \nab_q S(t, q)\Big|_{q = q(t)}}
\eeq
with $q(0)$ to be chosen. Note that once $q(0)$ is chosen, $p(0)$ is fixed by the first of equations \refeq{eq:pH}.
 The second one of Hamilton's equations \refeq{HeqOFmot} is automatically satisfied by solutions of \refeq{eq:guiding} by virtue of \refeq{eq:pH}. 
 
 Equation \refeq{eq:guiding} is a  {\em guiding law} on configuration space: the space gradient of the Hamilton--Jacobi phase function guides the actual motion of the particle by providing its velocity vector at the particle's actual location.
  For more than one particle this is a velocity vector field on configuration space, evaluated at the actual $N$-particle configuration.

 We already know that the ordinary differential equations of motion obey an action principle. 
  Also the Hamilton--Jacobi partial differential equation \refeq{eq:HJeik} is derivable from an action principle, in fact trivially 
  so at the expense of introducing a new function for the purpose of carrying out variations.
  In this vein, let $\varrho \geq 0$ be a function of $t$ and $q$, compactly supported on $\Rset^3$, then set
\beq
\cL_0(\varrho, S) := \left(\p_t S + V + \frac{1}{2m} |\nab_q S|^2 \right)\varrho,
\eeq
and let $\cA_0[\varrho,S] := \int_0^T\int_{\Rset^3} \cL_0(\varrho,S) d^3\!q dt$ be the corresponding action.  
 The stationary points of $\cA_0$ with respect to variations of $\varrho$ on $\varrho$'s support then trivially satisfy \refeq{eq:HJeik} on $\varrho$'s support.
 
 Now, if this was all there is to it, this variational principle for $S$ would merely be a ``cheap trick'' without offering anything new. 
 Yet if we ask for $\cA_0$ to be {\em jointly} stationary with respect to compactly supported variations of  {\em both} $\varrho$ and $S$,
 i.e. setting also the variation of $\cA_0$ with respect to ${S}$ equal to zero, we obtain the following continuity equation for $\varrho$:
\beq \label{eq:cont}
\boxed{\p_t \varrho + \nab_q\cdot \left( \varrho\frac{1}{m} \nab_q S\right) = 0}\,.
\eeq
Thus there is a natural interpretation that can be given to a stationary point $(\varrho,S)$ of $\cA_0$.

 Namely, the continuity equation implies that  $\int_{\Rset^3} \varrho(t,q) d^3\!q$ is preserved in time, hence
 normalizing $\varrho \geq	 0$ such that $\int_{\Rset^3} \varrho(t,q) d^3q =1$ we can at each time $t$ think of $\varrho$ as a probability density on configuration space
 $\cC$ for an ensemble of {\em independently evolving} mechanical systems of the same kind. 
  Given an initial map $q\mapsto p$ through $C^1$ initial data $S(0,q)$,
 by solving the equation \refeq{eq:HJeik} we obtain a time-evolving velocity field on configuration space, given by $\frac1m \nabla_q S(t,q)$, independently of
 the ensemble.
  Any initial $\varrho(0,q)$ defines an ensemble of  such {\em independent} individual systems, obeying the initial assignment $q\mapsto p$, and the 
  continuity equation for $\varrho(t,q)$ captures the change with time of the ensemble density as the velocity field on configuration space 
  transports the systems in the initial ensemble to other locations in configuration space, in the course of time. 

 Each individual system in the ensemble of course evolves according to \refeq{eq:guiding}, compatible with the evolution of the probability density. 
  This is our first instance of what is called {\em equivariance}.

\subsection{Least invasive quantization}
 In the Hamilton--Jacobi theory of evolution of a  classical ensemble of systems the ensemble density $\varrho$ is a {\em passive} variable, in the 
 sense that it is transported by the velocity field $\frac1m\nabla_q S$ while the evolution of $S$ in turn is completely unaffected by the evolution of
  $\varrho$ --- perfectly in line with the fact that the classical evolution of the systems in the ensemble is independent of each other. 
  Thus, once $S(t,q)$ has been determined, any initial $\varrho(0,q)$ is allowed and will be evolved by the same continuity equation.
   In particular, one can let $\varrho(0,q)\to \delta(q-q(0))$ and obtain the weak formulation of the guiding law \refeq{eq:guiding} for the actual motion of
   a single system starting at $q(0)$ with initial velocity $\dot{q}(0) = \frac1m \nab_q S(0,q)_{q=q(0)}$.
 
 The spirit of {\em least invasive quantization} is to change the role of $\varrho$ from being passive into {\em active}, a dynamical variable on par with $S$. 
  The continuity equation \refeq{eq:cont} and the guiding equation \refeq{eq:guiding} remain the same, but \refeq{eq:HJeik} will be continuously
  deformed (with the help of a parameter) into a modified PDE that also involves $\varrho$. 
   In the realm where Newton's mechanics gives an accurate description of nature, the $\varrho$-containing term will make an insignificant contribution and
   the modified Hamilton--Jacobi PDE will be practically (though not exactly) identical to \refeq{eq:HJeik}.
   
  Thus, while $\varrho $ still satisfies the mathematical requirements of a probability density, i.e. $\varrho\geq 0$ and $\int_{\Rset^3}\varrho(t,q)d^3\!q =1$,
  and therefore can be used for the computation of answers to all the same questions for which it can be used in the classical setting,
  it {\em gains} status beyond the original one by also assuming an active role in the generation of the particle dynamics.
   
  Of course, this idea in itself leaves open infinitely many possible ways to deform the Hamilton--Jacobi dynamics. 
   Yet there is only one way, modulo equivalences, to arrive at Schr\"odinger's equation on configuration space. 
    In the following we describe two equivalent procedures, each with its own narrative. 
     We also briefly explain the generalization that leads to the Pauli equation for a spinor wave function that guides the motion of a charged point particle.
         
   \subsubsection{Adding an entropy penalty term to the action functional  $\cA_0[\varrho,S]$.}
 The first procedure we describe starts from the Hamilton--Jacobi formulation involving the pair $(\varrho, S)$ and adds to the action functional the Fisher entropy functional of $\varrho$, multiplied by an appropriately dimensional parameter.
  Thus, the least invasive quantization  step  is the replacement
  \beq\label{eq:Qaction}
   \cA_0[\varrho,S] \to {\cA}_\hbar[\varrho, S] := 
   \int_0^T\!\! \int_{\Rset^3} \!\! \left(\p_t S + V + \frac{1}{2m} |\nab_q S|^2 \right)\!\varrho d^3\!q dt + \frac{\hbar^2}{2m}  \int_0^T\!\! \int_{\Rset^3}\! |\nab_q \sqrt{\varrho}|^2 d^3\!q dt .
  \eeq 
  Stationarity of $\cA_\hbar$ w.r.t.  variation of $S$ again yields the continuity equation \refeq{eq:cont}, while stationarity of $\cA_\hbar$ w.r.t. variation of  $\varrho$ now yields
\beq\label{eq:HJQF}
\boxed{\p_t S  + \frac{1}{2m} |\nab_q S|^2 = - V + \frac{\hbar^2}{2m} \frac{\De_q\sqrt{\varrho}}{\sqrt{\varrho}}} .
\eeq

 The least invasively quantized version of the classical Hamilton--Jacobi model, in which solutions of the Hamilton--Jacobi PDE \refeq{eq:HJeik}
 generate the actual motions through \refeq{eq:guiding} and transport an ensemble density of systems through \refeq{eq:cont}, describes a different 
 kind of dynamics. 
  Namely, the $\hbar$-deformed Hamilton--Jacobi PDE \refeq{eq:HJQF} has to be solved {\em jointly} with the continuity equation \refeq{eq:cont};
 subsequently the actual motions are obtained through solving the guiding equation \refeq{eq:guiding}. 
  
  We note that this model obeys equivariance of the evolution of the distribution of configurations. 

  The just described least invasively quantized Hamilton--Jacobi model is the de Broglie--Bohm model of quantum motions of point particles
  (modulo notational variations; 
  e.g. our $\sqrt{\varrho}$ is Bohm's $R$), but least invasive quantization is not how de Broglie or Bohm arrived at it. 
  Instead, their goal was to vindicate their (independent) proposals that solutions to Schr\"odinger's equation
\beq\label{eq:schrodinger}
i\hbar \p_t \psi = - \frac{\hbar^2}{2m}\De \psi + V \psi 
\eeq  
play the role of a {\em guiding field} on configuration space that generates the actual particle motions. 
 They noted that when they substituted the polar decomposition 
\beq \label{eq:polar}
 \psi := \sqrt{\varrho} e^{iS/\hbar}
\eeq
for $\psi$ in \refeq{eq:schrodinger} and sorted into real and imaginary parts,
one then easily sees that \refeq{eq:schrodinger} (locally) decomposes into the system \refeq{eq:HJQF}, \refeq{eq:cont}. 
  Since  \refeq{eq:HJQF} 
 superficially appears to be a Hamilton--Jacobi PDE with (what Bohm sanctioned) a ``quantum potential'' $-\frac{\hbar^2}{2m} \frac{\De_q\sqrt{\varrho}}{\sqrt{\varrho}}$ added to $V$, and since $\frac1m \nab_q S$ is still a velocity field on configuration space that transports the density $\varrho$ as per \refeq{eq:cont}, 
 it is compelling then to postulate that the velocity field also generates the actual motions through \refeq{eq:guiding}.
  De Broglie, in fact, was searching for a deformation of the Hamilton--Jacobi model already in his Ph.D. thesis, retaining \refeq{eq:guiding} all along,
  so for him pieces were falling in place when Schr\"odinger's equation appeared. 
  
\begin{remark}
There exists an unfortunate but widespread misperception that ``Bohm tried to derive quantum mechanics from Newtonian mechanics by adding new
kinds of forces that are very contrived.''
Such and similar statements, which can be found in the Encyclopedia Britannica and other authoritative publications, miss the point that the velocity field $\frac1m\nab_q S$
cannot be eliminated from the de Broglie--Bohm configuration space formulation of quantum mechanics to arrive at a genuine Newtonian theory of motion that 
involves only the actual positions and velocities (or momenta), and forces that depend only on these.
The fact that in the de Broglie--Bohm theory $\varrho$ plays an active role in the generation of the dynamics, which is impossible in any Newtonian dynamics,
 makes this plain: in de Broglie--Bohm theory you can't evolve $S$ without $\varrho$ and its second derivatives, and to evolve $\varrho$ you need $\nab_q S$. 
  (This and several other misperceptions of the de Broglie--Bohm theory are addressed in \cite{KieFoP}.)
 $\square$
 \end{remark}
 
   Least invasive quantization as described above also has the advantage that it is entirely straightforward to see that the mathematical  limit $\hbar\to 0$ of the 
  least invasively quantized model will manifestly give us back the Hamilton--Jacobi model of a $\varrho$-ensemble of classical systems, each one of which evolving as per the guiding equation  \refeq{eq:guiding} compatible with equivariance.
   Indeed, simply fix the initial data $S(0,q)$ and $\varrho(0,q)$ and note that the quantum potential term in \refeq{eq:HJQF} vanishes with $\hbar\to 0$. 
   The guiding equation and the continuity equation remain unaffected by this limit.
   
    Of course, in nature $\hbar$ is fixed, and thus the physically more informative criterion for the validity of the classical approximation is not the mathematical 
    limit $\hbar\to 0$ but a comparison of 
    the two terms $\propto\frac{1}{2m}$ in \refeq{eq:Qaction}. Since $S$ has physical dimension of action, set $S:=\hbar \Phi$. Then 
the de Broglie--Bohm particle dynamics described by \refeq{eq:guiding} and \refeq{eq:HJQF} approximates classical mechanical dynamics described
by  \refeq{eq:guiding} and \refeq{eq:HJeik} whenever  $|\nab_q\Phi|^2 \gg |\nab_q \ln {\varrho}|^2$. 
 Corrections to the classical motions can be computed systematically by looking for solutions of \refeq{eq:HJQF}
  in the form of a formal power series in $\hbar$, thus giving rise to WKB or so-called semi-classical approximations, all the time retaining the guiding equation
  \refeq{eq:guiding}.

  \begin{remark}
 In the orthodox QM literature  (e.g \cite{Messiah}, sect.6;  \cite{Pauli}, chpt.A, sect.12) one typically finds claims that the $\hbar\to 0$ limit of 
 \refeq{eq:HJQF}, producing \refeq{eq:HJeik}, would, in concert with \refeq{eq:cont}, prove that QM converges to classical Newtonian mechanics, 
 because the ``fluid elements on configuration space (with density $\varrho(t,q)$ and infinitesimal volume $\delta{V}$)'' would, in the Lagrange picture 
 of fluid mechanics, obey Newton's equations of motion (with ``infinitesimal mass $\varrho(t,q)\delta{V}$'').
 However, such claims are misguided when at the same time it is denied, as unfortunately is the case in orthodox accounts of
QM, that an individual $N$-body system would, at time $t$, have an actual configuration $q(t)$ unless one measures it. 
 If the complete quantum state at time $t$ of an individual system is $\psi_t\equiv (\varrho,S)_t$, then this remains the case when $\hbar \to 0$, because there 
 is nothing in the mathematics of Schr\"odinger's equation which would require that for an individual system $\varrho (0,q) \to \delta(q-q(0))$ as $\hbar \to 0$.
 So, to obtain Newtonian mechanics from QM, in addition to $\hbar\to 0$ one has to {\em postulate} that at time $t$ a physical $N$-body system does 
 have an actual configuration $q(t)$, or one inherits the dilemma of Schr\"odinger's cat. 
  And since one has to postulate this for the purpose of obtaining Newtonian mechanics in
  the limit $\hbar\to 0$, then it is prudent to postulate it directly for $\hbar >0$, evolving as per \refeq{eq:guiding}; 
  that's precisely what de Broglie--Bohm theory does. $\square$
  \end{remark} 

  \subsubsection{Switching from the real pair $(\varrho,S)$ to a complex $\psi$, then linearizing the PDE for $\psi$}
  There is a variation on the theme of what we call least invasive quantization, presented recently by Mia Hughes \cite{Mia},
   which has the enticing charm of showing that the de Broglie--Bohm version of 
  quantum mechanics could have been discovered by serendipity long before Planck's discovery of the law of the black body spectrum.
 
  Indeed, having the Hamilton--Jacobi model of an ensemble of Newtonian mechanical systems of the same kind that at time $t=0$ have momenta
  $p=\nabla_q S(0,q)$, with $q$ distributed by $\varrho(0,q)$, the ensemble distribution evolving as per \refeq{eq:cont} for an $S$ evolving as per \refeq{eq:HJeik},
 and with each member of the ensemble evolving as per \refeq{eq:guiding}, it is conceivable that someone notices that by defining 
 a complex-valued field defined on the particle configuration space, $\psi:\cC\to \Cset$, by $\psi: = \sqrt{\varrho}e^{iS/a}$, where $a$ is a parameter 
with physical dimension of action (for a narrative that pretends to predate Planck's law it would be strange to use $\hbar$ at this point, of course), the two PDEs 
\refeq{eq:HJeik} and \refeq{eq:cont} of the Hamilton--Jacobi model can be combined into a single complex-valued PDE, thus
\beq\label{eq:schrodclassical}
i a \p_t \psi = - \frac{a^2}{2m}\De \psi + V \psi + \frac{a^2}{2m} \frac{\De |\psi|}{|\psi|} \psi,
\eeq  
while the Hamilton--Jacobi guiding law expressed in terms of $\psi$ becomes 
\beq \label{eq:dBB}
\dot{q}(t) = a \frac{\Im(\psi^*\nab_q \psi)}{\psi^*\psi} \Big|_{q=q(t)}.
\eeq

 Before coming to the least invasive quantization step,  it is noteworthy to register that
while \refeq{eq:schrodclassical} is locally equivalent to the pair of PDE \refeq{eq:HJeik}, \refeq{eq:cont}, the continuity equation can be extracted
from  \refeq{eq:schrodclassical} without polar decomposition of $\psi$.
 Thus multiplying 
  \refeq{eq:schrodclassical} with $\psi^*$ and the complex conjugate equation of  \refeq{eq:schrodclassical} by $\psi$, then subtracting the latter from
  the former, the continuity equation for $\varrho = |\psi|^2$ materializes in the following form, identical to the one in QM:
\beq \label{eq:Schrodcont}
 \p_t |\psi|^2 + a \nab_q \cdot \Im( \psi^*\nab_q\psi ) = 0.
\eeq

 Having this equivalent complex reformulation of the Hamilton--Jacobi model of a $\varrho$-ensemble in front of us,  the {\em least invasive quantization} step, 
 modulo identification of $a=\hbar$, consists in  purging the nonlinear term in \refeq{eq:schrodclassical}, arriving at Schr\"odinger's equation \refeq{eq:schrodinger}, 
 except for the parameter $a$ in place of $\hbar$.
  Mia Hughes argued this step would be plausible because the double slit experiment suggests the linear superposition principle, and this is the obvious way to
  implement it.
  
 \begin{remark}
  Independently of the linear-superposition-principle argument, someone may have considered the approximate Hamilton--Jacobi PDE obtained by purging
  the nonlinear term in  \refeq{eq:schrodclassical}, simply because linear PDE are easier to discuss.
  Conceivably the so-approximated Kepler problem would have been studied, solved, and the Bohr energy spectrum discovered ahead of Bohr's model. 
   It is not too far fetched to imagine that someone in the 1890s could also have noticed that the family of differences $\triangle E$ between two energy levels resemble the Rydberg formula and (ahead of Planck, hence of Bohr) proposed that the hydrogen frequencies $\nu \propto \triangle E$, thus extracting a value for $a$, discovering $\hbar$. 

  $\square$
\end{remark} 
   
The above {\em least invasive quantization} step can also be implemented via the action principle. Expressed in terms of $\psi$ and $\psi^*$ as a real expression,
the classical  Lagrangian ${\cL}_C[\psi,\psi^*]$ reads
 \beq \label{eq:ClassicalLag}
{\cL}_C(\psi,\psi^*) = \frac{a i}{2}\left( \psi^* \p_t \psi -  \psi \p_t \psi^*\right)  - \frac{a^2}{2m} \nab_q \psi \cdot \nab_q\psi^*- V \psi^*\psi 
+ \frac{a^2}{2m} \big|\nab_q \sqrt{\psi^*\psi} \big|^2, 
\eeq 
and the classical action ${\cA}_C[\psi^*,\psi] := \int_0^T \int_{\Rset^3} {\cL}_C(\psi,\psi^*) d^3\!q dt$.
For the variations, $\psi^*$ and $\psi$ are to be treated as independent; $\psi^*$ variation yields \refeq{eq:schrodclassical}, $\psi$ variation the 
complex conjugate of \refeq{eq:schrodclassical}.
 The least invasive quantization step, modulo identification of $a$, is the replacement ${\cA}_C[\psi^*,\psi]\to {\cA}_Q[\psi^*,\psi]$, this time consisting of 
 {\em dropping} the Fisher entropy of $|\psi|$ from ${\cA}_C[\psi^*,\psi]$; viz. $ {\cA}_Q[\psi^*,\psi] := \int_0^T \int_{\Rset^3} {\cL}_Q(\psi,\psi^*) d^3\!q dt$,
 with
 \beq \label{eq:quantumLag}
{\cL}_Q(\psi,\psi^*) = \frac{a i}{2}\left( \psi^* \p_t \psi -  \psi \p_t \psi^*\right)  - \frac{a^2}{2m} \nab_q \psi \cdot \nab_q\psi^*- V \psi^*\psi .
\eeq 
 If $\psi$ is a stationary point of the action $ {\cA}_Q[\psi^*,\psi] $ then it will satisfy the Schr\"odinger equation
\beq \label{eq:Schrod}
ia \p_t \psi = \hat{H} \psi,\qquad \hat{H} := -\frac{a^2}{2m}\De_q + V,
\eeq
as obtained by purging the nonlinear term in \refeq{eq:schrodclassical}.
 Since only the Fisher term has been purged from the classical action $\cA_C[\psi^*,\psi]$, while the guiding equation \refeq{eq:dBB} has been retained, 
 we have arrived at the de Broglie--Bohm model of quantum motions --- modulo identification of $a$ with $\hbar$.
 
 We emphasize that in the least invasive quantization narrative, whether the $(\varrho,S)$ version or the $\psi$ version, there is no room for arguing why the guiding equation \refeq{eq:guiding}, equivalently \refeq{eq:dBB}, should suddenly be purged. 
  Doing so would give Schr\"odinger's equation alone, hence his cat.
 
\begin{remark}
 While we have emphasized that the $\psi$ version of least invasive quantization has the advantage of an ``appearance of plausibility'' 
 (what we mean by ``discovering quantum physics by serendipity"), we should also emphasize that
 there is a price to be paid, namely it is not at all manifest why \refeq{eq:ClassicalLag} should be the classical limit of  \refeq{eq:quantumLag}.
 Superficially, the two formulations, using the real pair $(\varrho,S)$ vs. using the complex $\psi$, even seem to contradict each other in their logic:
 in the $(\varrho,S)$ formulation, the Fisher term is missing from the classical action and is added in the least invasive quantization step; in the
 $\psi$ formulation, the Fisher term is purged in the least invasive quantization step. 
  Yet the two classical actions are equivalent, and so are the two quantum actions. 
   The illuminating resolution of this apparent paradox is left for the reader to ponder. $\square$
\end{remark}

\subsubsection{Generalization: charged particles and spinor wave functions} 
  The classical (test particle) motion in $\RR^3$ of a point electron with mass $m$ and electric charge $-e$ that is acted on
by applied electromagnetic fields obtained from given potentials $\phi$ and $A$ (assuming for simplicity no non-electromagnetic forces are acting) 
is obtained by simply identifying $V(t,q) =-e\phi(t,q)$ and replacing $\nab_q S(t,q) \to \nab_q S(t,q) +\frac{e}{c}A(t,q)$
in the classical Hamilton--Jacobi formalism that has been recalled above.
  Thus the actual Newtonian particle motion with given initial data $q(0)$ and $p(0) = \nab_q S(0,q)|_{q(0)}+ \frac{e}{c}A(t,q(0))$
is obtained by solving 
\beq \label{eq:guidingEM}
\boxed{\dot{q}(t) = \frac{1}{m} \left(\nab_q S(t, q)\Big|_{q = q(t)}+ \frac{e}{c}A(t,q(t))\right)}
\eeq
with $q(0)$ given, where $S$, with $S(0,q)$ chosen such that $p(0) = \nab_q S(0,q)|_{q(0)}+ \frac{e}{c}A(t,q(0))$, solves the Hamilton--Jacobi PDE 
obtained from compactly supported $\varrho$-variations of the action  
 \beq\label{eq:actionEM}
  \cA_0[\varrho,S] = 
   \int_0^T\!\! \int_{\Rset^3} \!\! \left(\p_t S -e\phi + \frac{1}{2m} \big|\nab_q S + \frac{e}{c}A\big|^2 \right)\!\varrho d^3\!q dt  ;
  \eeq 
variations w.r.t. $S$ yield the continuity equation 
for $\varrho$ with velocity field $\frac1m\left(\nab_q S(t,q) +\frac{e}{c}A(t,q)\right)$.

  These so-called minimal coupling steps are of course well-known; they are essentially dictated by demanding gauge-invariance of the dynamical theory.
  
  Also known, though less well-known, is the Hamilton--Jacobi theory for the classical motion of a point electron with charge $-e$, mass $m$, and
  magnetic moment $\mu(t)$ that obeys the Euler law of evolution  $\dot\mu(t) = \frac{e}{2mc}\mu(t)\times B\big(t,q(t)\big)$, which implies that $|\mu(t)|$ is constant. 
   For the magnitude of the electron's magnetic moment a natural choice is what elsewhere we have called the {\em classical magnetic moment} of the
electron, $e^3/4\pi mc^2$, i.e. the product of the elementary charge and the so-called {\em classical electron radius}, divided by $\frac{1}{4\pi}$.
  Since the equations now become a bit unwieldy, we only write down the action functional. 
   Let $\vartheta$ and $\varphi$ denote the usual angles on $\SS^2$. 
   Then $\mu = - (e^3/4\pi mc^2)(\sin\vartheta \cos \varphi, \sin\vartheta \sin\varphi, \cos \vartheta)$. 
   With $\vartheta$ and $\varphi$ functions of $t$ and $q$, the action reads
  \beq\label{eq:actionEMmu}
  \cA_0[\varrho,S,\vartheta,\varphi] = 
   \int_0^T\!\!\! \int_{\Rset^3} \!\! \left(\p_t S -e\phi +\tfrac{e^2}{4\pi c}\cos \vartheta \p_t\varphi
   + \tfrac{1}{2m} \big|\nab_q S + \tfrac{e}{c}A + \tfrac{e^2}{4\pi c}\cos \vartheta \nab_q\varphi \big|^2 -\mu\cdot B\right)\!\varrho d^3\!q dt  .
  \eeq  
  
   Next, to get from the classical theory of motion of such a point electron with magnetic moment, to the de Broglie--Bohm theory  involving Pauli's
   equation by least invasive quantization, takes two ingredients: replace the classical quantum of action $e^2/c$ by Planck's $h$ in all 
   expressions that represent the magnetic moment of the electron, call this functional $\cA_0^\hbar$, 
   then add the Fisher entropy term 
 \beq\label{eq:FisherMU}
  \cF[\varrho,\vartheta,\varphi] = 
  \frac{\hbar^2}{2} \int_0^T\! \!\int_{\Rset^3} \!\sum_{k=1}^3 \big|\nab_q \sqrt{\varrho_k}\big|^2  d^3\!q dt  
  \eeq  
to $\cA_0^\hbar[\varrho,S,\vartheta,\varphi]$ to obtain the quantum action functional  $\cA_\hbar[\varrho,S,\vartheta,\varphi]$; here,
$\varrho_1 = \varrho \sin^2\frac{\vartheta}{2}\cos^2 \frac{\varphi}{2}$, 
 $\varrho_2 = \varrho \sin^2\frac{\vartheta}{2}\sin^2 \frac{\varphi}{2}$,
 and
 $\varrho_3 = \varrho \cos^2 \frac{\vartheta}{2}$.
 Now defining
 \beq\label{eq:SPINOR}
 \psi = \sqrt{\varrho} e^{iS/\hbar} \begin{pmatrix} \cos\frac{\vartheta}{2} e^{i\varphi/2} \\ i \sin \frac{\vartheta}{2}e^{-i\varphi/2}\end{pmatrix},
 \eeq
 the least invasively quantized action reveals itself as the action of the Pauli equation, viz.
 \beq\label{eq:PauliACTION}
 \widehat\cA_\hbar [\psi,\psi^\dagger]= 
 \int_0^T\! \!\int_{\Rset^3}\!\!\left( \frac{\hbar}{2i}\left( \psi^\dagger\p_t\psi -(\p_t \psi^\dagger) \psi \right) 
 + \frac{1}{2m} \left| \vec\sigma \cdot \left(-i\hbar\nab_q  + \frac{e}{c}A\right)\psi\right|^2 - e\phi |\psi|^2 \right)d^3\!q dt  ,
 \eeq
 where $\vec\sigma$ is the vector of the three Pauli matrices. 

  The reverse steps, going from  $\widehat\cA_\hbar [\psi,\psi^\dagger]$ to $\cA_\hbar[\varrho,S,\vartheta,\varphi]$, have been described in \cite{Reginatto}.
 
\section{Hamilton--Jacobi Theories for Fields}
Before we can talk about field-theoretical generalizations of the least invasive quantizaion procedure outlined in the previous section, we need to review the generalization of Hamiltonian dynamics and Hamilton--Jacobi theory to the case of Lorentz-covariant fields. Most of the material in this section are taken from the excellent survey of Kastrup \cite{Kas83} and the pioneering work of Christodoulou \cite{C1} on the subject.

\subsection{Lagrangian field theory}  
Consider the action 
\beq\label{def:action}
\cA[\phi,\cD] := \int_\cD L\circ \si_\phi 
\eeq
for $\cD$ a bounded open domain in the spacetime $(\cM,g)$, which we take to be an $m+1$-dimensional Lorentzian manifold, and $\phi: \cM \to \cN$ a (classical) field on spacetime that is smooth (at least $C^1$) on $\cD$, and is taking its values in some smooth manifold $\cN$ of dimension $n$. 

Moreover, $L$ is a {\em Lagrangian density}, i.e. an $m$-form-valued section of the {\em velocity bundle}
\beq
\cV := \bigcup_{(x,q)\in \cM\times\cN} \cL( T_x\cM, T_q\cN)
\eeq
and $\si_\phi$ is the section of $\cV$ that corresponds to $\phi$ (see below.)
Here $\cL(V,W)$ denotes the set of linear maps from vector space $V$ to vectorspace $W$. Given local coordinates $(x^\mu)$ on $\cM$ and $(q^a)$ on  $\cN$, let $\{\frac{\p}{\p x^\mu}\}$ and $\{\frac{\p}{\p q^a}\}$ denote corresponding bases for the tangent spaces $T_x\cM$ and $T_q\cN$, and let $\{dx^\mu\}$, $\{dq^a\}$ be the dual bases to those, for $T^*_x\cM$ and $T^*_q\cN$, resp.  A linear transformation $v\in \cL(T_x\cM,T_q\cN)$ can thus be expanded in these bases as $v = v_\mu^a (dx^\mu\otimes \frac{\p }{\p q^a})$ so that $(x^\mu, q^a, v_\mu^a)$ form a local system of coordinates for $\cV$. We refer to $v_\mu^a$ as the {\em canonical velocities}.

$\cV$ is clearly a bundle over the field configuration space $\cC := \cM \times \cN$, and therefore also a bundle over $\cM$ as well as one over $\cN$.  For any bundle $\cB$ over a base manifold $\cM$, let $\pi_{\cB,\cM}$ denote the projection onto the base.  Let $\wedge_p\cM$ denote the tensor bundle of $p$-forms on the manifold $\cM$.  Consider the pullback bundle $\pi_{\cV,\cM}^*\wedge_m\!\cM$. This is a bundle over $\cV$.  A {\em  Lagrangian density} $L$ is a section of this bundle.  Given a map $\phi:\cM\to \cN$ let $d\phi:T\cM \to T\cN$ denote its tangent map (also known as its derivative, or differential). It follows that $\si_\phi : \cM \to \cV$ defined by $\si_\phi(x) = (x,\phi(x),d\phi(x))$ is a section of $\cV$ as a bundle over $\cM$.  Hence for any Lagrangian density $L$ defined as in the above, $L\circ \si_\phi$ is a section of $\wedge_m\cM$, i.e. an $m$-form, which therefore can be integrated on a domain $\cD$ in $\cM$.  Moreover, since $\cM$ is here assumed to be a Lorentzian manifold with a metric $g$, it has a distinguished volume form $\eps = \eps[g]$, the one that is covariantly constant, and thus there exists a function $\ell = \ell(x,q,v)$ such that $L\circ \si = (\ell \circ \si) \eps$. We call $\ell$ the Lagrangian density function corresponding to $L$.

For the remainder of this paper we are going to assume that $\cN$ is a linear space, i.e. isomorphic to $\Rset^n$ and that $\cM$ is the $3+1$-dimensional Minkowski space.  This is done mainly to simplify the presentation, and most of what follows is generalizable to curved spacetimes and nonlinear targets.  (For instance, when $\cN$ is a nonlinear space, in order to proceed further we would need to fix a symmetric connection on $\cN$, and then we always need to prove that any result we obtain is independent of the choice of that connection. See \cite{C1} for details.)

As is well known, the Euler-Lagrange equations for critical points of the action \refeq{def:action} take the form
\beq
\p_\mu \frac{\p \ell}{\p v_\mu^a} - \p_a \ell = 0.
\eeq
Consider the particular case of the following Lagrangian:
\beq
\ell_{SF}(x,q,v) = \half v_\mu^a v^\mu_a - V(q),
\eeq
so that, when composed with a section $\si_\phi$ we obtain
\beq
(\ell_{SF} \circ \si_\phi) (x) = \half \p_\mu \phi^a \p^\mu \phi_a - V(\phi(x)).
\eeq
This is the so-called Lagrangian of a {\em covariant (Lorentz-)scalar field} (``scalar" here may be a bit of a misnomer since $\phi$ may have multiple components, but it is meant to distinguish this with the case of $\phi$ being a vectorfield or other tensorfield defined on $\cM$, which requires a more careful treatment, see \cite{C1}, Chap. 4.)

The Euler-Lagrange equations for stationary points of the scalar field action $\cA_{SF}[\phi,\cD] := \int_\cD \el_{SF} \circ\phi \eps$ are therefore
\beq
\square_x \phi^a - \p^a V(\phi) = 0,\qquad \square_x := \p_\mu \p^\mu,
\eeq
which is a coupled system of linear or semi-linear wave equations, depending on whether the dependence of $V$ on $\phi$ is quadratic or not.  One could also obtain {\em quasilinear} systems of wave equations here, if one considers the more general family of Lagrangians that are of the form $\ell(x,q,v) = f(\xi) - V(q)$ with $\xi = v_\mu^a v^\mu_a$. For example the so-called membrane equations (a.k.a. ``scalar Born-Infeld") belong to this family, with $f(\xi) = 1-\sqrt{1-\xi}$.

\subsection{Covariant Hamiltonian field theory of DeDonder and Weyl}

Because Hamilton's equations \refeq{HeqOFmot} are ordinary differential equations, where the only independent variable is time, at first sight it seems problematic to ask for a covariant Hamiltonian theory, since the specification of a time function on spacetime breaks the Lorentz covariance of the theory.  
On the other hand, soon after Hamilton's original contribution, and beginning with Volterra in 1890, mathematicians had begun to obtain generalizations of Hamiltonian theory to more than one independent variable (see \cite{Kas83} for the full history of this subject.) One of these was DeDonder, who in 1913 obtained such a generalization, as well as an associated Hamilton--Jacobi theory for variational problems with several independent variables.  Another significant contribution was by Caratheodory in 1929.  In 1930 DeDonder published a monograph on the subject, and after further discussion and analysis of DeDonder's theory by Weyl in 1934/35, this theory began to be referred to as the {\em DeDonder--Weyl} (DW) theory, which is the one we will describe below.

Let $\cA$ be a classical action of the form \refeq{def:action}.  Recall that the quantities $v_\mu^a$ are called {\em canonical velocities}.  By analogy with the classical Hamiltonian mechanics, let 
\beq \label{def:p}
p_a^\mu := \frac{\p \ell}{\p v_\mu^a}
\eeq
be the {\em canonical momenta}.  Just as canonical velocities can be described abstractly as fibers of the velocity bundle $\cV$, it is possible to define a dual object, called the {\em momentum bundle} $\cP$, whose fibers are the canonical momenta $p$ (More precisely, it is $*p$, the Hodge dual of $p$ with respect to the volume form $\eps$ of $\cM$, that form the fibers of $\cP$.)  The definition of the momentum bundle is
\beq
\cP = \bigcup_{(x,q)\in \cM\times \cN} \wedge_{m-1,1}(T_x\cM,T_q\cN),
\eeq
 where $\wedge_{m-1,1}(V,W) := \cL(W,\wedge_{m-1}V) = \cL(\wedge^{m-1}V,W^*)$ for any two vector spaces $V$ and $W$. In coordinates, this means that the fibers over $\cC$ are
\beq *p = p^\mu_a \eps_{\mu,\mu_1,\dots,\mu_{m-1}}dx^{\mu_1}\wedge\dots\wedge dx^{\mu_{m-1}}\otimes dq^a.
\eeq

One can view the definition \refeq{def:p} of $p$ as a mapping from $\cV$ into $\cP$ that takes $v \mapsto p(v) :=\p \ell/\p v$. This is called the {\em Legendre transform}.  We will restrict ourselves to those variational problems for which this mapping is invertible, i.e. the Legendre transform is non-singular.  (Singular cases can be handled by a further extension of DW theory, see \cite{Kas83}.)

Let $v =v(p)$ denote the inverse of the Legendre transform. The {\em DeDonder--Weyl Hamiltonian} $H_{DW}$ is then defined to be
\beq \label{def:HDW}
H_{DW} = H_{DW}(x,q,p) := v(p)_\mu^a p^\mu_a - \ell(x,q,v(p)).
\eeq
The {\em DeDonder--Weyl--Hamilton} (DWH) equations are then as follows: 
\bna
\p_\mu q^a(x) & = & \frac{\p H_{DW}}{\p p_a^\mu}(x,q(x),p(x)) \label{eq:DWq}\\
\p_\mu p^\mu_a (x) & = & -\frac{ \p H_{DW}}{\p q^a}(x,q(x),p(x)), \label{eq:DWp}
\ena
where $p(x) := \frac{\p \ell}{\p v}(x,q(x),dq(x))$ (see \cite{Kas83} or \cite{C1} for a derivation.)  

While the definitions of momentum in the case of one independent variable \refeq{def:p1} and several independent variables \refeq{def:p} seem identical, as do the definitions of the corresponding Hamiltonians \refeq{eq:HamLEGENDRE} and \refeq{def:HDW}, there are some crucial differences between these two Hamiltonian theories, apart from the fact that DWH equations are not ODEs but PDEs:
\begin{enumerate}
\item Unlike \refeq{def:p1}, the definition of canonical momentum \refeq{def:p} is {\em not} unique.  It turns out that there is quite a bit of freedom in the definition of $p_a^\mu$. One can in fact define $p_a^\mu := \frac{\p \ell}{\p v_\mu^a} - h^{\mu\nu}_{ab}v_\nu^b$ where the only restriction on functions $h_{ab}^{\mu\nu}=h_{ab}^{\mu\nu}(x,q,v)$ is that they are anti-symmetric in both lower and upper indices.  This is connected with the theory of {\em null Lagrangians} (see \cite{C1}, p. 125).  On the one hand this freedom can be used to extend DWH theory to the cases where the Legendre transform is singular.  On the other hand, this lack of uniqueness means that there are many different Hamiltonian theories that correspond to the same Lagrangian field theory, and one perhaps needs some other criteria to distinguish between them.  Certainly the DeDonder--Weyl theory presented here is the {\em simplest} such theory, since its $h\equiv 0$, but it may not be the ``best" such theory by some other criteria. For example it can be argued that Caratheodory's theory is superior to DW in some respects (see \cite{Kas83} for a detailed discussion.)
\item Another main difference between single and multiple independent variable Hamiltonian theories is that in the latter case, the first Hamilton equation \refeq{eq:DWq} has an {\em integrability condition}: Suppose there exists a smooth field $\phi : \cM \to \cN$ such that the section $(x,q,v)$ with $q^a = \phi^a(x)$,  $v_\mu^a = \p_\mu \phi^a(x)$ satisfies \refeq{eq:DWq}.  Then the equality of mixed partial derivatives of $\phi$ implies that  the right hand side of this equation, when evaluated on $q = \phi(x)$, has to satisfy an integrability condition
\beq \label{eq:IC}
 \p_\nu \frac{\p H_{DW}}{\p p_a^\mu} - \p_\mu \frac{\p H_{DW}}{\p p_a^\nu} = 0.
\eeq
Another way to phrase this integrability condition is to think in terms of the inverse of the Legendre transform $v = v(p)$.  Once $p$ is determined in some way, this mapping implies that $v$ is determined. If this $v_\mu^a = \p_\mu \phi^a$ for some field $\phi$, then one must have
$\p_\nu( v(p)_\mu^a) = \p_\mu (v(p)_\nu^a)$.  Expanding this gives
\beq \label{eq:integcondv}
\p_\mu v_\nu^a + \p_b v_\nu^a v_\mu^b - \p_\nu v_\mu^a - \p_b v_\mu^a v_\nu^b = 0.
\eeq
The above can also be written in the language of differential forms: Let $w^a := v_\mu^a dx^\mu$.  Then \refeq{eq:integcondv} is equivalent to 
\beq \label{eq:integcond}
dw^a +w^b \wedge D_b w^a = 0,
\eeq
(which is the way Christodoulou expresses it in \cite{C1}.  This is the second equation in what he calls the {\em Euler System}). Here $d$ denoted exterior differentiation in $\cM$ and $D$ is a covariant derivative on $\cN$ (with respect to an arbitrary symmetric connection.)

Clearly, the integrability condition \refeq{eq:integcond} is a serious restriction.  It could make it difficult to solve these equations.  In particular, as we shall see in the next section, the existence of these integrability conditions makes the DeDonder--Weyl generalization of Hamilton--Jacobi equations much less useful than the classical HJ theory.  
\end{enumerate}

\subsection{DeDonder--Weyl--Christodoulou covariant Hamilton--Jacobi theory}

Arguably the most obvious generalization of the first equation in \refeq{eq:pH} to the case of several independent variables is to assume the existence of a {\em vectorfield} $\bS = S^\mu(x,q) \frac{\p }{\p x^\mu}$ defined on the field configuration space $\cC$ in such a way that 
\beq \label{eq:DWHJp}
p_a^\mu = \p_a S^\mu.
\eeq
From \refeq{eq:DWp} one immediately gets that 
\beq \label{eq:DWHJS}
\p_\mu S^\mu + H(x, q, \p_a S^\mu) = 0.
\eeq
(We have dropped the ``DW" subscript from $H$ since that's the only type of Hamiltonian we will henceforth be talking about.) This is called the DeDonder--Weyl Hamilton--Jacobi (DWHJ) equation.  One notes that this is a single equation for a four component object, so  there seems to be quite a bit of freedom in choosing $\bS$.  However, let $v = v(p)$ be the inverse Legendre transform, and let $\bS(x,q)$ be any solution of \refeq{eq:DWHJS}.  By \refeq{eq:DWHJp}, its gradient (with respect to field variables) determines $p$, and thus $v$ through the inverse Legendre transform.  As described in the previous section, this immediately gives rise to the integrability condition \refeq{eq:integcond}, which should be viewed as another equation for $\bS$.
Solutions of the DWHJ equation \refeq{eq:DWHJS} that do not satisfy this integrability condition, do not correspond to actual stationary points of the original Lagrangian, and are therefore of limited use as far as the analysis of solutions relevant to the theory is concerned.  

Let us examine this issue in the context of the classical scalar field theory.  Recall that $\ell_{SF} = \half v_\mu^a v^\mu_a - V(q)$, and therefore
\beq
p_a^\mu = \frac{\p \ell_{SF}}{\p v_\mu^a} = v^\mu_a,
\eeq
so that 
\beq
H_{SF} = \half p_a^\mu p^a_\mu +V(q).
\eeq
The DWHJ equation for the scalar field is therefore
\beq \label{eq:DWHJSF}
\p_\mu S^\mu + \half \p_a S^\mu \p^a S_\mu + V = 0.
\eeq
Moreover, we have 
\beq\label{dwguiding}
v_\mu^a = p^a_\mu = \p^a S_\mu,
\eeq
so that $\bS$ must also satisfy the integrability condition
\beq
\p_\nu \left( \p^a S_\mu(x,\phi(x)) \right) - \p_\mu\left(\p^a S_\nu(x,\phi(x)) \right) = 0,
\eeq
which, after expansion and using that $\p_\mu\phi^a = \p^a S_\mu(x,\phi)$ gives the following nonlinear second-order system for $\bS$
\beq
\p_a\p_\nu S_\mu - \p_a\p_\mu S_\nu + \p_b S_\nu \p^b \p_a S_\mu - \p_b S_\mu \p^b\p_a S_\nu = 0,\qquad a=1,\dots,n,\quad 0\leq\mu<\nu\leq 3.
\eeq
Setting $w^a := \p^aS_\mu dx^\mu$ shows that $w$ must satisfy Christodoulou's second ``Euler system" equation \refeq{eq:integcond}.
Therefore a complete set of equations for the DWHJ vectorfield $\bS$ for the classical scalar field theory is
\bna 
\p_\mu S^\mu + \half \p_a S^\mu \p^a S_\mu + V & = & 0,\label{eq:SFS}\\
d w^a + w^b \wedge \p_b w^a & = & 0,\qquad w^a:= \p^a S_\mu dx^\mu.\label{eq:SFw}
\ena
\subsection{Christodoulou's density function}

 Suppose now we want to attempt to carry out the same  least-invasive quantization procedure that we did in the case of a mechanical system of particles
 in $\Rset^3$.
  Thus we need to begin by considering an action principle for the DWHJ equation \refeq{eq:DWHJS}. 
   The obvious candidate for an action for \refeq{eq:DWHJS} is
\beq \label{eq:actionDWHJ}
\cA := \iint_\cC \left(\p_\mu S^\mu + H(x,q,\nab_q \bS)\right) \rho\  d^4xd^nq,
\eeq
with $\rho = \rho(x,q)$ a function defined on the field configuration space $\cC = \cM \times \cN$.  
 Demanding that $\cA$ be stationary w.r.t. variations of $\rho$ manifestly gives the DWHJ equation \refeq{eq:DWHJS}, while demanding stationarity of $\cA$ 
 w.r.t. variations of  $S^\mu$ will give us a system of equations for $\rho$:
\beq \label{eq:rho}
\p_\mu \rho + \p_a \left( \rho \frac{\p H}{\p p_\mu^a} \right) = 0,\qquad \mu = 0,\dots,3.
\eeq
Once again we note that the above system has an integrability condition. As shown in \cite[Prop.~2.11]{C1}, the integrability for the $\rho$ system above is the same as the original integrability condition we have seen before, namely \refeq{eq:integcond}. 

The above function $\rho$ to our knowledge makes its first appearance in mathematical literature in the work of Christodoulou \cite[p. 99]{C1}, who called it the {\em density function}\footnote{This terminology is justified since it is easy to see from \refeq{eq:actionDWHJ} that $\rho(x,\cdot)$ has the physical dimensions of a density (i.e. $1/\mbox{Volume}$) on $\cN$.}, and proposed the equation it should satisfy, which is the same as \refeq{eq:rho}, except that he wrote it in differential forms language, and for the Hodge dual of $\rho$ (with respect to the volume form of $\cN$), as
\beq
d\rho + \delta( w \cdot \rho) = 0
\eeq
($\delta$ = exterior derivative on $\cN$).  Christodoulou called this ``the equation of continuity," and used it to show that the integral of $\rho$ over any domain $\Om$ in $\cN$ is invariant under the Hamiltonian flow (assuming that the integrability conditions are satisfied.) 

We remark that, similar to the case of the classical Hamilton--Jacobi theory, here the first DWH equation \refeq{eq:DWq} also has the interpretation of a {\em classical guiding law}, this time for a classical field beable $\phi(x)$, namely, once $\bS$ is known, evaluating \refeq{eq:DWq} on $q=\phi(x)$ results in an equation specifying all derivatives of $\phi$ at a point $x$ once the value of $\phi$ is known at that point:
\beq
\p_\mu \phi^a(x)= \frac{\p H}{\p p^\mu_a}\left(x,\phi(x),\nab_q \bS(x,\phi(x))\right).
\eeq
Since the integrability condition for the above is assumed to be satisfied (recall this is \refeq{eq:SFw}) the above equation has a unique solution in the neighborhood of any point $(x_0,q_0) \in \cC$.  

For example, in the case where $H$ is the classical scalar field Hamiltonian \refeq{eq:DWHJSF}, the above is simply
\beq \label{eq:ClassGuid} \p_\mu \phi^a(x) = \p^a S_\mu(x,\phi(x)),
\eeq
while the equation satisfied by $\rho$, \refeq{eq:rho}, is simply
\beq \label{eq:SFrho}
\p_\mu \rho + \p_a (\rho \p^a S_\mu ) = 0.
\eeq

As explained by Christodoulou \cite[p. 93]{C1}, a solution $\bS$ of the DWHJ equations (\ref{eq:SFS}--\ref{eq:SFw}) generates a {\em flow} 
\stz{on field configuration space in the following way:  Let $F: \cM\times \cC \to \cN$ be the mapping}
$F(x,x_0,q_0) = \phi(x)$
where $\phi$ is the unique solution to \refeq{eq:ClassGuid} with $\phi(x_0) = q_0$\stz{, and let $F_{(x,x_0)}: \cN \to \cN$ denote the associate (2-parameter) flow on $\cN$: $F_{(x,x_0)}(q_0) = F(x,x_0,q_0)$.  For fixed $x_0\in \cM$ and $\Omega_{x_0}\subset \cN$ such that $q_0\in \Omega_{x_0}$, let $\Omega_x := F_{(x,x_0)}(\Omega_{x_0})$.}

\stz{On the other hand, one can use} 
 the density function $\rho$ to define an $x$-dependent {\em mass function} on $\cN$: For any point $x\in \cM$ and domain $\Omega \subset \cN$, one defines the (dimensionless) function
\beq \mu_x(\Om) := \int_\Om \rho(x,q) d^nq.
\eeq
 Christodoulou showed \cite[p. 104]{C1} that the equation of continuity \refeq{eq:SFrho} for $\rho$ implies that this mass function is invariant under the DWHJ flow $F$, 
\stz{in the sense that the function $m(x) := \mu_{x}(\Om_x)$ is independent of $x$.}

 \stz{Furthermore, integrating \refeq{eq:SFrho} on all of $\cN$ and using the divergence theorem yields that (under appropriate decay conditions on the field) $\mu_x(\cN)$ is a constant, and therefore can be normalized to be equal to one. Thus $x \mapsto \mu_x$ assigns to a point $x\in \cM$ a {\em probability measure} on $\cN$ that is invariant with respect to the DWHJ flow.} We wish to see to what extent this can be replicated for a ``quantized" version of the DWHJ theory.

The complete set of DeDonder--Weyl--Christodoulou (DWC) equations for a covariant Hamilton--Jacobi theory of the classical scalar field therefore consists of \refeq{eq:SFS} and \refeq{eq:SFrho}, {\em together} with their common integrability condition \refeq{eq:SFw}.  
If we truly wanted to replicate the least-invasive quantization procedure in this context, we would have to first find an action principle for this combined set of equations (including the integrability condition), and then seek to deform {\em that} action into one for a complex (or hypercomplex)-valued field $\Psi = \Psi(x,q)$ defined on the field configuration space that satisfies a linear, Lorentz-covariant equation.  At the moment we do not know if the set of equations (\ref{eq:SFS},\ref{eq:SFw},\ref{eq:SFrho}) is even derivable from a Lagrangian.  

In light of this issue, we may choose to proceed differently, first by trying to guess what a linear, Lorentz-covariant equation for $\Psi$ could look like, and then asking what would the analog to \refeq{eq:polar} be of the polar (or more appropriately, Pr\"ufer) decomposition of $\Psi$ in terms of $\rho$ and $\bS$, to see
if anything resembling the DWCHJ equations emerges in this way. This in a nutshell is the program started by I. V. Kanatchikov in a series of papers beginning with \cite{Kan99}. We can also ask if the resulting equations for $\rho$ allow it to be interpreted probabilistically, and if not, what further assumptions need to be imposed in order to do so. These are the subjects of our next section.

\section{Kanatchikov's Covariant Generalization of Schr\"odinger's Equation}

Kanatchikov argued that in order to render the Schr\"odinger equation \refeq{eq:Schrod} Lorentz-covariant, it is sufficient to
replace the time derivative $\p_t$ that appears on its left, by the covariant version of it, namely
$\slashed\p= \gamma^\mu\p_\mu$, with $\{\gamma^\mu\}_{\mu=0}^3\subset M_4(\Cset)$ being a set of Dirac matrices, i.e. generators of the complexified spacetime Clifford algebra $\Aset = \mathop{Cl}_{1,3}(\Rset)_\Cset$:
\beq
\label{eq:Kanat}
i\lambda\slashed\p \Psi = \hat{H} \Psi,\qquad \hat{H} = - \frac{\lambda^2}{2}\De_q + V,
\eeq
where $\lambda$ is a parameter whose relation to $\hbar$ is to be determined. (It is interesting to note that Kanatchikov's wave equation can also be recovered from a covariant canonical quantization procedure developed by von Hippel \& Wohlfarth applied to DeDonder-Weyl's classical Hamiltonian field theory \cite{vHippWohl06}.)

Since this change makes the operator $\slashed\p$ itself an element of this algebra, it stands to reason that what it acts on, namely $\Psi$, would also belong to this algebra, i.e. $\Psi$ would be a section of a vector bundle over field configuration space, with $\Aset$-valued fibers: $\Psi = \Psi(x,q) \in \Aset$.  

If on the other hand $\Psi$ belongs to a subspace $\Bset$ of $\Aset$ that is {\em not} an ideal of $\Aset$, then \refeq{eq:Kanat} also implies that the extra components that may appear on the left-hand side of this equation, those that are not in $\Bset$, would have to be zero, since $\hat{H}$ is a scalar operator, and therefore the right-hand side will always be in $\Bset$.

Kanatchikov then takes $\Bset$ to be the (5-dimensional) subspace of $\Aset$ spanned by the matrices $\{I, \ga^0,\dots,\ga^3\}$.  Here $I$ denotes the $4\times 4$ identity matrix.  Thus every $\Psi\in\Bset$ can be written as
\beq \label{eq:psipluspsimu}
\Psi = \psi I + \psi_\mu \gamma^\mu
\eeq
for $\psi ,\psi_\mu \in \Cset$.  Note that $\Bset$ is not an ideal, since the product of two gamma matrices is not in $\Bset$.  

\refeq{eq:Kanat} and \refeq{eq:psipsimu} now readily imply that $\psi=\psi(x,q)$ and $\psi_\mu = \psi_\mu(x,q)$ must satisfy the following system
\bna
i\lambda \p_\mu \psi & = & \hat{H} \psi_\mu\label{eq:psimu} \\
i\lambda \p_\mu \psi^\mu & = & \hat{H} \psi \label{eq:psi}\\
\p_\mu \psi_\nu - \p_\nu\psi_\mu & = & 0.\label{eq:icpsimu}
\ena
(The last equation is the consequence of $\Bset$ not being an ideal.)  Below we shall see that this system is in fact self-consistent and solvable. In particular, the third equation is the integrability condition of the first equation. (Note that we are only considering potentials $V =V(q)$, so that the Hamiltonian $\hat{H}$ does not depend explicitly on $x$.)

%%%%%%%%%%%%%%%%%%%%%%%%%%%%%%%%%
\subsection{Solvability of the Kanatchikov System}
We now turn to the system of equations (\ref{eq:psimu}--\ref{eq:icpsimu}) satisfied by $(\psi,\psi_\mu)$, and show that this is solvable so long as the quantum Hamiltonian $\hat{H}$ is self-adjoint as an operator on field configuration space and that it does not depend explicitly on the spacetime point $x$. These conditions are for example satisfied by $\hat{H} = -\frac{\lambda^2}{2}\De_q + V$ for a variety of $x$-independent potentials $V = V(q)$.
\stz{
\begin{proposition}
The Kanatchikov system (\ref{eq:psimu}--\ref{eq:icpsimu}) is well-posed and uniquely solvable so long as $\hat{H}$ is self-adjoint and does not depend explicitly on $x$.
\end{proposition}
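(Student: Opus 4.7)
My plan is to decouple the system by eliminating $\psi_\mu$ in favor of $\psi$, thereby reducing it to a single second-order hyperbolic equation whose ``mass-squared'' is an operator on the target fiber, and then to invoke standard well-posedness theory for the Klein--Gordon equation, made precise through the spectral theorem for the self-adjoint $\hat H$.

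The first step is to derive a master wave equation for $\psi$. I would apply $\partial^\mu$ to \refeq{eq:psimu}, exploit the hypothesis that $\hat H$ has no explicit $x$-dependence (whence it commutes with each $\partial_\mu$) to pull $\hat H$ past the spacetime derivative, and then substitute \refeq{eq:psi} on the right-hand side to eliminate $\psi^\mu$. The result is the scalar wave equation
\[
-\lambda^2\,\square_x\psi \;=\; \hat H^2\,\psi,
\]
i.e.\ a Klein--Gordon-type equation in $x$ with self-adjoint nonnegative operator $\hat H^2/\lambda^2$ acting on the fiber variable $q$. Conversely, once this equation is solved, \refeq{eq:psimu} algebraically determines $\psi_\mu = i\lambda\,\hat H^{-1}\partial_\mu\psi$ on the orthogonal complement of $\ker\hat H$, and the integrability condition \refeq{eq:icpsimu} then follows \emph{automatically} from the commutativity of mixed partials of $\psi$; \refeq{eq:psi} is likewise a consequence of the master equation by construction. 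Thus the three coupled equations collapse to a single scalar wave equation with an operator coefficient.

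For existence and uniqueness I would then apply the spectral theorem to $\hat H$ on $L^2(\cN)$, writing $\hat H = \int E\,dP_E$ and decomposing $\psi(x,\cdot) = \int \psi_E(x,\cdot)\,dP_E$. On each spectral leaf with $E\neq 0$, the master equation becomes the ordinary scalar Klein--Gordon equation $(\square_x + E^2/\lambda^2)\psi_E=0$, whose Cauchy problem on a spacelike slice $\Sigma_0$ is classically well posed in, say, finite-energy function spaces. Natural initial data for the full Kanatchikov system consist of the pair $(\psi,\psi_0)\big|_{\Sigma_0}$: \refeq{eq:psimu} with $\mu=0$ reconstructs $\partial_0\psi|_{\Sigma_0}$, and with $\mu=i$ recovers the spatial components $\psi_i|_{\Sigma_0}$ from the tangential derivatives $\partial_i\psi|_{\Sigma_0}$. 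Reassembling the spectral components via the spectral integral then yields a unique global solution on $\cM\times\cN$ together with continuous dependence on the data.

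The main obstacle I anticipate is the behavior on and near $\ker\hat H$, where the reconstruction $\psi_\mu = i\lambda\,\hat H^{-1}\partial_\mu\psi$ is no longer valid. On $\ker\hat H$ itself, \refeq{eq:psimu} forces $\partial_\mu\psi=0$, so $\psi$ is $x$-constant there, while \refeq{eq:psi} and \refeq{eq:icpsimu} reduce to the requirement that the 1-form $\psi_\mu\,dx^\mu$ be both closed and coclosed in $x$, a system tractable by standard harmonic-analysis methods on Minkowski spacetime. If the continuous spectrum of $\hat H$ accumulates at zero, $\hat H^{-1}$ is only densely defined and the reconstruction of $\psi_\mu$ must be understood via functional calculus on the appropriate form domain; for the physically most relevant class $\hat H=-\tfrac{\lambda^2}{2}\Delta_q+V(q)$ with bounded-below $V$, an additive constant shift of $V$ renders $\hat H$ strictly positive and removes this obstruction entirely, completing the well-posedness argument.
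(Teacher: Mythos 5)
Your proof is correct, and it shares the paper's crucial first step — differentiating \refeq{eq:psimu}, commuting the $x$-independent $\hat H$ past $\partial_\mu$, and substituting \refeq{eq:psi} to obtain the decoupled master equation $-\lambda^2\square\psi = \hat H^2\psi$ — but it then diverges from the paper in how $\psi_\mu$ is reconstructed. The paper instead uses the integrability condition \refeq{eq:icpsimu} together with Poincar\'e's lemma to write $\psi_\mu = \partial_\mu\chi$ for a scalar potential $\chi$, then substitutes into \refeq{eq:psi} to obtain an auxiliary inhomogeneous wave equation $i\lambda\square\chi = \hat H\psi$, which it solves for $\chi$. Your route is more direct: it solves \refeq{eq:psimu} for $\psi_\mu = i\lambda\,\hat H^{-1}\partial_\mu\psi$ outright, and then both \refeq{eq:icpsimu} (from equality of mixed partials, since $\hat H^{-1}$ commutes with $\partial_\mu$) and \refeq{eq:psi} (by applying $\partial_\mu$ to this formula and using the master equation) become corollaries rather than inputs. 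This actually closes a small gap in the paper's argument, which constructs $\chi$ so that \refeq{eq:psi} and \refeq{eq:icpsimu} hold but never verifies that \refeq{eq:psimu} is recovered component-by-component (one only learns that the residual of \refeq{eq:psimu} is divergence-free). The price of your route is the need to control $\ker\hat H$, which you address explicitly — a point the paper leaves implicit. Your spectral-theoretic treatment of the Cauchy problem (decomposing on spectral leaves of $\hat H$, specifying initial data $(\psi,\psi_0)|_{\Sigma_0}$ and recovering $\partial_0\psi|_{\Sigma_0}$ and $\psi_i|_{\Sigma_0}$ from \refeq{eq:psimu}) is also more precise than the paper's appeal to ``a variety of ways, including separation-of-variables,'' and better justifies the word ``well-posed'' in the statement.
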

}

\begin{proof}

We begin by noting that, differentiating \refeq{eq:psimu} and using \refeq{eq:psi} we obtain
\beq
i \lambda \p^\mu \p_\mu \psi = \p^\mu(\hat{H} \psi_\mu) = \hat{H} \p^\mu\psi_\mu  = \hat{H}\left(\frac{1}{i\lambda} \hat{H}\psi \right) = \frac{1}{i\lambda} \hat{H}^2 \psi,
\eeq
so that $\psi$ is decoupled from $\psi_\mu$ and satisfies its own equation
\beq\label{eq:psiwave}
-\lambda^2 \dal \psi = \hat{H}^2 \psi
\eeq 
where $\dal := \p^\mu \p_\mu$ is the D'Alembertian (wave) operator on Minkowski space.

Furthermore, \refeq{eq:psimu} shows that, once $\psi$ is known, $\psi_\mu$ would be determined.
To see this in more detail, let us define a 1-form $\tilde{\psi}(x,q) := \psi_\mu(x,q) dx^\mu$. We observe that \refeq{eq:icpsimu} implies that
\beq
(d\tilde{\psi})_{\mu\nu} = \p_\mu\psi_\nu - \p_\nu\psi_\mu = 0
\eeq
so that by Poincar\'e's lemma there exists a scalar $\chi = \chi(x,q)$ with the property that $\tilde{\psi} = d\chi$, 
where $d$ is the exterior derivative opertor in $x$.
Thus 
\beq \label{eq:psiisgrad}
\psi_\mu = \p_\mu \chi.
\eeq

Equation \refeq{eq:psiwave} can be solved in a variety of ways, including by separation-of-variables.  
%But 
\stz{S}uppose now that \refeq{eq:psiwave} is solved, so that $\psi$ is known.  Substituting \refeq{eq:psiisgrad} into \refeq{eq:psimu} we then obtain
\beq \label{eq:chi}
i\lambda \dal\chi = \hat{H}\psi
\eeq
which is an inhomogeneous wave equation with a given right-hand-side, and again solvable by various methods.  Once $\chi$ is known, $\psi_\mu$ is determined by \refeq{eq:psiisgrad}.

In conclusion,
the Kanatchikov system  (\ref{eq:psimu}--\ref{eq:icpsimu}) is uniquely solvable.  Moreover, the above analysis shows that
in Kanatchikov's Ansatz \refeq{eq:psipluspsimu}, 
the quantities $\psi_\mu$ are determined by $\psi$ and do not represent additional degrees of freedom for the system.

\end{proof}

\subsection{Connection of Kanatchikov's equation to DeDonder--Weyl Theory}

Kanatchikov \cite{Kan99} also proposed an analog of \refeq{eq:polar} for a $\Psi$ in $\Bset$, namely
\beq \label{eq:KanatDecomp}
\Psi = R \exp\big( {i} \gamma_\mu S^\mu/{\lambda}\big),\qquad R, S^\mu \in \Cset .
\eeq
This readily implies that
\beq  \label{eq:cossin}
\psi = R \cos\Big(\frac{1}{\lambda}|\bS|\Big),\qquad \psi_\mu = i R \frac{\sin(\frac{1}{\lambda}|\bS|) }{|\bS|} S_\mu,
\eeq
where 
$|\bS| := \left( S_\mu S^\mu \right)^{1/2}.$

From here on we are going to make the additional assumptions (not made by Kanatchikov)  that $R$ and $S_\mu$ are real, and that $\bS$ is a timelike vectorfield.  These assumptions are made in order to ensure that one does not get exponentially growing terms in \refeq{eq:cossin} as $\lambda \to 0$. 

Kanatchikov then continues his investigation by plugging \refeq{eq:cossin} into \refeq{eq:Kanat} and proceeds to derive equations for $R$ and $S_\mu$.  He arrives at the surprising conclusion that in the limit $\lambda \to 0$ one does {\em not} recover the DWHJ equation \refeq{eq:DWHJS} for $S_\mu$ (Kanatchikov does not appear to be aware of Christodoulou's continuity equation \refeq{eq:SFrho} and does not mention any candidate for the $R$ equation.)  More precisely, he shows that in order for $\bS$ to satisfy a deformed version of the DWHJ equation, it would have to satisfy a whole set of other constraints, which in essence force $\bS$ to be equal to $(S,0,0,0)$, thus throwing things back to the classical Hamilton--Jacobi equation of particle mechanics \refeq{eq:HJeik} and not the DeDonder--Weyl generalization of it to fields. 

With hindsight, this outcome should not have been surprising: as we have seen, the DWCHJ theory has an integrability condition \refeq{eq:SFw} that needs to be imposed {\em in addition} to the equations for $\rho$ and $\bS$, while the Kanatchikov system {\em contains} its own integrability condition.  Thus it stands to reason that the reduction of this system under the Ansatz \refeq{eq:KanatDecomp} would have to contain more than just the DWHJ equation for $\bS$ and the continuity equation for $\rho = R^2$, but also at least some remnant of the integrability conditions.

Here we take a different approach, recalling that our proposed least \stz{invasive} quantization procedure involves the deformation of the {\em action} of the classical theory, not directly the equations.  We therefore begin by reducing the action corresponding to Kanatchikov's equation \refeq{eq:Kanat}, under the decomposition \refeq{eq:KanatDecomp} he proposed, and compare that to the action of the DWCHJ equations in the same variables, to see how one can arrive at one from the other.

Let $\Aset$ denote the complexified spacetime algebra $\mathop{Cl}_{1,3}(\Rset)_\Cset$. The elements of $\Aset$ are called {\em Clifford numbers}. For $\Psi \in \Aset$ we define its {\em Dirac adjoint} $\overline{\Psi}$ (also known as {\em Clifford reversion}) as
\beq\label{eq:DiracAdjoint}
\overline{\Psi} := \gamma^0 \Psi^\dag \gamma^0.
\eeq
Since $\Aset$ is isomorphic to the matrix algebra $M_4(\Cset)$ it is convenient to think of $\Psi$ as a $4\times 4$ complex matrix.   The Dirac gamma matrices $\ga^0,\dots,\ga^3$ are $4\times 4$ matrices satisfying the Clifford relations $\{\ga^\mu,\ga^\nu\} = 2 \eta^{\mu\nu} I$ where $\eta = (\eta_{\mu\nu})$ is the Minkowski metric  $\eta = \diag(1,-1,-1,-1)$.  Note that $\overline{\ga^\mu} = \ga^\mu$. A convenient basis for $\Aset$ is
\beq
\cB :=\{ I;\quad \ga^0,\dots,\ga^3;\quad \al^1,\al^2,\al^3;\quad \Si^1,\Si^2,\Si^3;\quad \ga^5\ga^0,\dots, \ga^5\ga^3;\quad \ga^5 \}
\eeq
where $\al^k := \ga^0\ga^k$; $\Si^k := i\ep_{lj}{}^k\ga^l\ga^j$, and $\ga^5 = i \ga^0\ga^1\ga^2\ga^3$. The {\em scalar part} $(\Psi)_{sc}$ of a Clifford number is the coefficient of $I$ in the expansion of $\Psi$ in any basis for $\Aset$.  Using the matrix isomorphism, we have
\beq
(\Psi)_{sc} = \frac{1}{4}\tr \Psi
\eeq
where $\tr$ denotes ordinary matrix trace.  We also introduce the non-degenerate bilinear form
\beq
\langle \Phi,\Psi \rangle = \iint_\cC  \left(\overline{\Psi}\Phi\right)_{sc}\ d^4x d^nq
\eeq
for $\Bset$-valued fields $\Phi,\Psi$ defined on $\cC$.

An action for Kanatchikov's equation \refeq{eq:Kanat} is
\beq\label{eq:KanatAction}
\scr{S}[\Psi,\overline{\Psi}] = \iint_\cC \frac{1}{4}\tr \left\{\frac{\lambda}{2i}\left( \overline{\Psi}\slashed\p \Psi - \overline{\slashed\p \Psi} \Psi \right) + \overline{\Psi}\hat{H}\Psi \right\} d^4x d^nq = \iint \ell_Q[\Psi,\overline{\Psi}] d^4xd^nq.
\eeq
Suppose $\Psi \in \Bset$ has a decomposition of the form 
\beq \label{eq:KanatchDecomp}
\Psi = \psi I  +\psi_\mu \ga^\mu = R\exp\left({i}\ga_\mu S^\mu/{\lambda}\right).
\eeq
Clearly, $\lambda$ must have the same physical units as $S^\mu$, which can be seen to have units of action {\em density}. Thus we set $$\lambda = \hbar \kappa$$ where $\kappa$ is a constant with dimension (length)${}^{-n}$.

Let us define
\beq \label{eq:rhozetau}
\rho = |R|^2,\qquad \zeta := |\bS| 
,\qquad u^\mu := {S^\mu}/{|\bS|},
\eeq
so that we have
\beq \label{eq:psipsimu}
\psi = \sqrt{\rho} \cos(\zeta/\lambda), \qquad \psi_\mu = i \sqrt{\rho} \sin(\zeta/\lambda) u_\mu.
\eeq
Substituting these into the action \refeq{eq:KanatAction} we obtain that
\bna\label{eq:DKaction}
\ell_Q[\rho,\zeta, \bu] &=& \rho\left( u^\mu \p_\mu \zeta + \half |\nab_q \zeta|^2 + V(q) \right) \\
&& \mbox{} + \lambda \rho \sin\left(\frac{\zeta}{\lambda}\right) \cos
\left(\frac{\zeta}{\lambda}\right) \p_\mu u^\mu  + \frac{\lambda^2}{2} \rho  \p_a u^\mu \p^a u_\mu
\sin^2\left(\frac{\zeta}{\lambda}\right) + \frac{\lambda^2}{2} |\nab_q \sqrt{\rho}|^2.\nonumber
\ena
In order to see the relationship with DWCHJ, let us substitute $S^\mu = \zeta u^\mu$ in the integrand of \refeq{eq:actionDWHJ}, to obtain the classical Lagrangian density function
\beq
\ell_C[\rho,\zeta,\bu] = \rho\left(  u^\mu \p_\mu \zeta + \half |\nab_q \zeta|^2 + V(q)  + \zeta \p_\mu u^\mu + \half \zeta^2 \p_a u^\mu \p^a u_\mu\right).
\eeq
Using \refeq{eq:psipsimu} we can express the above as a function of $\Psi$ and $\overline{\Psi}$, with some left over terms:
\bna
\ell_C[\Psi,\overline{\Psi}] &= & \frac{1}{4}\tr \left\{\frac{\lambda}{2i}\left( \overline{\Psi}\slashed\p \Psi - \overline{\slashed\p \Psi} \Psi \right) + \overline{\Psi}\hat{H}\Psi \right\} \nonumber\\
&&\mbox{} +\frac{\lambda}{2} \rho \ \p_\mu u^\mu \left(2\frac{\zeta}{\lambda} - \sin 2\frac{\zeta}{\lambda} \right) \label{term1}\\
&& \mbox{} - \frac{\lambda^2}{2} \left\{ |\nab_q \sqrt{\rho}|^2 -\rho (\p_a u^\mu \p^a u_\mu) \left(\frac{\zeta^2}{\la^2} - \sin^2 \frac{\zeta}{\la}\right)\right\},\label{term2}
\ena 
and therefore one can obtain the quantum Lagrangian density $\ell_Q$ in \refeq{eq:KanatAction} by simply {\em dropping} the last two terms \refeq{term1} and \refeq{term2} in the above expression for the classical $\ell_C$, in analogy with the narrative provided in Section 2 for arriving at the quantum particle Lagrangian \refeq{eq:quantumLag}.  

We can also see that by making restrictive assumptions on $\bu$, for example that it is a {\em constant} unit vector, one can get rid of the extra terms in (\ref{term1},\ref{term2}) so that only the Fisher term remains.  This leads us to consider {\em plane-wave} solutions in the two theories, which by definition are solutions in which the direction of the vectorfield $S^\mu$ is constant, so that the wave fronts of the Hamilton-Jacobi flow on the configuration space are planes.  

\subsection{Plane-Wave Solutions of DWCHJ}
Let $n = (n^\mu)$ be a fixed unit time-like vector: $n_\mu n^\mu = 1$.  Consider the Anstaz
\beq\label{eq:density_g}
S^\mu(x,q) = f(n_\mu x^\mu, q) n^\mu,\qquad \rho(x,q) = g(n_\mu x^\mu ,q),
\eeq
where $f = f(\sigma,q)$ and $g=g(\sigma,q)$ are smooth functions defined on a domain $\Om\subset \Rset\times\Rset^n$. The parameter $\sigma$ labels the hyperplane leaves of the foliation of the spacetime determined by $n_\mu$. These functions are also subject to the restrictions:
\beq\label{eq:fgconds}
f(\sigma,q)\geq 0,\quad \p_\sigma f(\sigma,q)\leq 0,\qquad g(\sigma,q)\geq 0\qquad  \forall (\sigma,q) \in \Om.
\eeq
Suppose that $(S^\mu,\rho)$ solves the DWCHJ equations (\ref{eq:SFS},\ref{eq:SFrho}).  Since 
\beq
w^b = \p^b S_\mu dx^\mu = \p^b f(n\cdot x, q) n_\mu dx^\mu
\eeq
it follows that $dw^b = 0$ and $w^c\wedge \p_c w^b = 0$ so that the integrability condition \refeq{eq:SFw} is trivially satisfied. Moreover, since 
\beq
\p_\mu S^\mu(x,q) = \p_\sigma f(n\cdot x, q),\quad \p_b S_\mu(x,q) = \p_b f(n\cdot x, q) n_\mu,
\eeq
and
\beq
\p_\mu \rho(x,q) = \p_\sigma g(n\cdot x, q) n_\mu,\quad \p_b\rho(x,q) = \p_b g(n\cdot x, q)
\eeq
it follows that the functions $f$ and $g$ satisfy the classical Hamilton--Jacobi and continuity equations of Hamiltonian mechanics, with $H(q,p) = \half|p|^2+V(q)$:
\bna
\p_\sigma f + \half \p_b f \p^b f + V &=& 0,\label{eq:redf}\\
 \p_\sigma g + \p_b (g \p^b f ) & =&  0.\label{eq:redg}
\ena

\begin{example}
As a simple example (see \cite{HollandBOOK} \S 6.6.1 for more details), consider a single scalar field with potential $V(q) = \half \om^2 q^2$ for $q \in \Rset$.  For this potential, substituting the Ansatz
\beq
f(\sigma,q) = A(\sigma) + B(\sigma) q + \half C(\sigma) q^2
\eeq
in \refeq{eq:redf} yields that 
\beq\label{sol:redf}
f(\si,q) = -\frac{\om}{2} \left( \frac{B_0^2}{\om^2} + q^2\right) \tan(\om(\si - \si_0)) + B_0 q \sec(\om(\si-\si_0)),
\eeq
for constants $\si_0, B_0 \in \Rset$. 
We may now obtain the corresponding solution to \refeq{eq:redg}, since for given $f$ this is a transport equation.  Let $g_0: \Rset \to \Rset$ be {\em any} probability density function on $\Rset$, i.e. 
\beq
g_0(q)\geq 0,\qquad \int_\Rset g_0(q) dq = 1.
\eeq
For the corresponding solution of \refeq{eq:redg} that satisfies $g(\si_0,q) = g_0(q)$ we easily obtain
\beq
g(\si,q) = g_0\left( q \sec(\om(\si-\si_0)) - \frac{B_0}{\om} \tan(\om(\si-\si_0))\right) \sec(\om(\si-\si_0)).
\eeq
The  solution of DWHJ equations \refeq{eq:SFS} that corresponds to \refeq{sol:redf} is as follows:
\beq 
S^\mu(x,q) = \left\{-\frac{\om}{2} \left( \frac{B_0^2}{\om^2} + q^2\right) \tan(\om n\cdot (x - x_0)) + B_0 q \sec(\om n\cdot(x-x_0))\right\} n^\mu
\eeq
for any $x_0\in \Rset^{1,3}$, $B_0\in\Rset$.  Moreover, since we know that the integrability condition \refeq{eq:SFw} is satisfied, there exists field $\phi=\phi(x)$ such that
\beq
\p_\mu \phi(x) = \frac{\p S_\mu}{\p q}(x,\phi(x)) = \left\{-\om \tan(\om n\cdot(x-x_0)) \phi(x) + B_0 \sec(\om n\cdot(x-x_0))\right\} n_\mu
\eeq
which has the familiar solution
\beq 
\phi(x) = q_0 \cos(\om n\cdot (x - x_0)) + \frac{B_0}{\om} \sin(\om n\cdot(x - x_0)).
\eeq
We note that
\beq 
q_0 = \phi(x_0),\qquad B_0 n_\mu = \p_\mu \phi(x_0)
\eeq
are the initial values for the field and its 4-velocity (which has a constant direction).

\hfill $\square$
\end{example}

\subsection{Plane-wave solutions of Kanatchikov's equation} 

We next examine the existence of plane-wave solutions of the Kanatchikov equation, i.e. the system of equations satisfied by stationary points of the action corresponding to \refeq{eq:DKaction}.  We note that since $\bu$ is a unit vector, its variations are constrained.  To turn those into unconstrained variations we introduce a Lagrange multiplier $\Lambda$, so that the action becomes
\bna
\cS[\rho,\zeta, \bu,\La] &=& \iint d^4x d^nq \left\{  \rho\left( u^\mu \p_\mu \zeta + \half |\nab_q \zeta|^2 + V(q) \right) \right. \\ && \mbox{} \left.+ \lambda \rho \sin\left(\frac{\zeta}{\lambda}\right) \cos
\left(\frac{\zeta}{\lambda}\right) \p_\mu u^\mu  + \frac{\lambda^2}{2} \rho  \p_a u^\mu \p^a u_\mu
\sin^2\left(\frac{\zeta}{\lambda}\right) + \frac{\lambda^2}{2} |\nab_q \sqrt{\rho}|^2 + \La (u_\mu u^\mu -1 )\right\} \nonumber
\ena

Taking variations with respect to $\rho$, $\zeta$, $u^\mu$ and $\La$, we obtain respectively
\bna
&u^\mu\p_\mu \zeta + \half \p_a\zeta\p^a \zeta + V +\frac{\lambda}{2}\sin\frac{2\zeta}{\lambda}\p_\mu u^\mu + \frac{\lambda^2}{2}\sin^2\frac{\zeta}{\lambda}\p_au^\mu \p^au_\mu -\frac{\lambda^2}{2} \frac{\De_q \sqrt{\rho}}{ \sqrt{\rho}}= 0&\label{eq:zetaKan}\\
&\p_\mu (\rho u^\mu) + \p_a(\rho \p^a \zeta) -\rho \cos\frac{2\zeta}{\lambda} \p_\mu u^\mu - \frac{\lambda}{2}\rho\sin\frac{2\zeta}{\lambda} \p_a u^\mu \p^a u_\mu = 0&\label{eq:rhoKan}\\
&- \rho \p_\mu \zeta + \La u_\mu +\frac{\lambda}{2} \p_\mu(\rho \sin\frac{2\zeta}{\lambda})  + \lambda^2 \p_a (\rho\sin^2\frac{\zeta}{\lambda} \p^a u_\mu ) = 0&\label{eq:uKan}\\
& u_\mu u^\mu = 1.&
\ena
Using the last two equations, $\La$ can be computed:
\beq \label{eq:laval}
\La = \rho u^\mu \p_\mu \zeta -\frac{\lambda}{2} u^\mu \p_\mu \left(\rho \sin\frac{2\zeta}{\lambda}\right) + \lambda^2 \rho \sin^2\frac{\zeta}{\lambda} \p^a u_\mu\p_a u^\mu,
\eeq
so that the full set of equations that are equivalent to the Kanatchikov system is (\ref{eq:zetaKan}--\ref{eq:uKan}), with $\La$ substituted for from \refeq{eq:laval}.
We are now in position to consider plane-wave solutions to the above.  Again let $n = (n^\mu)$ be a fixed unit future-directed timelike vector, and consider solutions of the above equations that are of the form
\beq \label{planewaveAnsatz}
\zeta = f(n\cdot x,q),\quad \rho=g(n\cdot x, q),\quad u^\mu \equiv n^\mu,
\eeq
with $f$ and $g$ as before. From \refeq{eq:zetaKan} and \refeq{eq:rhoKan} we immediately obtain
\bna 
\p_\sigma f  +\half \p_a f\p^a f + V - \frac{\lambda^2}{2}\frac{\De_q \sqrt{g}}{\sqrt{g}} & = & 0,\label{eq:f}\\
\p_\sigma g + \p_a( g \p^a f ) & = & 0\label{eq:g}
\ena
which are {\em precisely} the Hamilton--Jacobi and continuity equations deformed by the addition of the ``Fisher information" term to the action or ``quantum potential" term to the HJ equation that we have encountered before.  We can also verify that \refeq{eq:uKan}, which contains an integrability condition, is automatically satisfied, so this is indeed an exact solution of the \stz{Kanatchikov} system. Furthermore, equation \refeq{eq:g} has precisely the form of a continuity equation so that $g$ can be interpreted as a probability density at a single `time' labeled by the parameter $\sigma$.

\stz{
Finally, it is evident that as $\la \to 0$, equation \refeq{eq:f} coincides with \refeq{eq:redf}, while \refeq{eq:g} is identical with \refeq{eq:redg}.  This establishes the claim in Theorem~\ref{thm:conv}.
}
\begin{example}
To continue with the harmonic oscillator example, we consider the special case where the initial wave function is a Gaussian wave packet, with mean $\mu$ and variance $\alpha$, i.e. 
\beq
g_0(q) = \frac{1}{\sqrt{2\pi\alpha}} e^{-(q - \mu)^2/2\alpha} .
\eeq
In this case the quantum potential term in \refeq{eq:f} is quadratic in $q$ and can therefore be used to cancel the classical harmonic oscillator potential $V$, so that the Ansatz
\beq f(\si,q) = A(\si) + B(\si) q
\eeq
suffices to solve \refeq{eq:f}.  The corresponding 
solution
(For simplicity we have taken $\si_0 = 0$ and $B_0 = 0$) 
to (\ref{eq:f}-\ref{eq:g}) is
\bna
f(\si,q) & = & -\frac{\lambda}{2} \om \si - \frac{\om}{2} \left( 2 q q_0 \sin \om \si - \half q_0^2 \sin 2\om\si \right) \\
g(\si,q) & = & \frac{1}{\sqrt {2\pi {\lambda}/{2\om}}} \exp\left\{-\frac{(q - q_0\cos \om \si)^2}{2{\lambda}/{2\om}}\right\}.
\ena
We note that as $\lambda \to 0$ the probability distribution concentrates on the classical trajectory, i.e. 
\beq
\lim_{\lambda \to 0} g(\si,q) = \de(q - q_0\cos \om\si).
\eeq
\hfill$\square$
\end{example}

The plane-wave Ansatz \refeq{planewaveAnsatz} has a counterpart for the $\Psi$ field:
\beq\label{planewave}
\Psi_n(x,q) = \sqrt{g(n\cdot x, q)} \exp\left\{\frac{i}{\lambda} f(n\cdot x,q) \gamma_\mu n^\mu\right\} .
\eeq
This will be a plane-wave solution of \refeq{eq:Kanat} for any $n \in \Rset^{1,3}$ such that $n_\mu n^\mu = 1$.  Since this equation is linear, any superposition of solutions is a solution. It thus follows that for any solution pair $(f,g)$ of (\ref{eq:f},\ref{eq:g}) satisfying \refeq{eq:fgconds} the following is a solution of \refeq{eq:Kanat}
\beq
\Psi(x,q) = \int_{n_\mu n^\mu = 1} a(n) \sqrt{g(n\cdot x,q)}\exp\left\{ \frac{i}{\lambda} f(n\cdot x,q) \ga_\mu n^\mu \right\} d\varsigma
\eeq
where $d\varsigma$ is the surface measure on the unit hyperboloid $n_\mu n^\mu = 1$ and $a(n)$ is an arbitrary function on it.

\section{A Guiding Law for Covariant Scalar Fields}

\subsection{Equivariance}
Let us recall that the guiding law \refeq{eq:dBB} for Schr\"odinger's equation \refeq{eq:Schrod} goes hand-in-hand with the continuity equation \refeq{eq:Schrodcont} to ensure that the {\em equivariance property} holds: 

A) Let $\Om \subset \Rset^3$ be a region in the particle configuration space. Integrating \refeq{eq:Schrodcont} on $\Om$ and using the divergence theorem one obtains
\beq \label{eq:prob}
\p_t\int_\Om \varrho(t,q) d^3\!q = - \int_{\p \Om} \varrho(t,q)\bv_\psi(t,q)\cdot \bn d\sigma
\eeq
where 
\beq
\varrho := |\psi|^2,\qquad {\varrho}  \bv_\psi := {\hbar}\Im( \psi^*\nab_q \psi).
\eeq
In particular if $\Om = \Rset^3$ and $\psi$ decays to zero sufficiently rapidly as $|q|\to \infty$ then the right-hand-side of \refeq{eq:prob} is zero, so that the total integral of $\varrho$ is independent of time. This total integral, when finite, can be normalized to be one, which gives $\varrho = |\psi|^2$ the interpretation of a probability density function defined on the particle configuration space; recall our discussion in section 2.

B) Now let $\Sigma_0 \subset \{0\}\times\Rset^3$ be a region in the initial time slice of the spacetime.  Consider the flow on the particle configuration space generated by the guiding equation \refeq{eq:dBB}: $\Phi_t(q) = q(t)$ where $q(t)$ is the solution to the ODE \refeq{eq:dBB} with initial data $q(0) = q$.  Let $\Sigma_t  := \Phi_t \Sigma_0$ and define $\cD := \bigcup_{0\leq t\leq T} \Sigma_t$.  Note that the continuity equation \refeq{eq:Schrodcont} implies that the vectorfield $\bX := \varrho\frac{\p}{\p t} + \varrho \bv_\psi^a \frac{\p}{\p q^a}$ is divergence-free on $\Rset\times \Rset^3$, so that we have
\beq\label{eq:diveq}
 0 =\iint_\cD \mbox{div}\,\bX = \int_{\p \cD} \bX \cdot \bn d\sigma = \int_{\Sigma_T} \varrho(T,q) d^3q - \int_{\Sigma_0} \varrho(0,q)d^3\!q + \int_\cK  \bX \cdot \bn_\cK d\sigma,
\eeq
where $\cK = \bigcup_{0\leq t\leq T}\Phi_t(\p \Sigma)$ is the lateral surface of the cylindrical domain $\cD$, obtained by flowing $\p \Sigma$ under the flow $\Phi_t$.  Since the velocity field of particles is $\bv_\psi$ it follows that the particle current is tangential to the lateral surface $\cK$ so that the last integral in \refeq{eq:diveq} is zero.  We therefore have
\beq
\int_{\Phi_T(\Sigma)} \varrho(T,q) d^3q = \int_{\Sigma} \varrho(0,q)d^3\!q.
\eeq

{\em Conclusion}: if the initial position of the particle is randomly distributed according to the density $\varrho(0,\cdot)$, then the position of the particle at any later time $t$ is distributed by the density $\varrho(t,\cdot)$.  This is called equivariance \cite{Equivar92}. 

More generally, any guiding equation (whether for particles or for fields) needs to be connected with a conserved current, so that there exists an equivariant density on the configuration space. An equivariant density on the configuration space is, along with the {\em quantum equilibrium hypothesis}, necessary for showing how de Broglie--Bohm guiding laws reproduce the standard QM predictions for subsystems \cite{Equivar92}.

\stz{We note here that, in contrast to the classical case \refeq{eq:SFrho}, in the quantum case the equation \refeq{eq:rhoKan} satisfied by $\rho = R^2 = \Psi^\dagger\Psi$ is {\em not} a continuity equation, due to the presence of extra $u$-dependent terms, so that $\rho$ in general does not appear to be the equivariant distribution we expect to have in connection with a Born Rule.}

\subsection{Riesz tensor and system of conservation laws for Kanatchikov equations}
Consider now the Kanatchikov equation as well as its adjoint
\beq
i\lambda \ga^\mu \p_\mu \Psi = \hat{H}\Psi,\qquad -i\lambda \p_\mu \overline{\Psi} \ga^\mu = \overline{\hat{H}\Psi}.
\eeq
Multiplying the first equation by $\overline{\Psi}$ on the left and $\ga^\nu$ on the right, multiplying the second equation by $\Psi\ga^\nu$ on the right, and subtracting the two equations, we obtain
\beq
i\lambda\p_\mu\left( \overline{\Psi}\ga^\mu \Psi \ga^\nu\right) = \overline{\Psi} \hat{H}\Psi\ga^\nu - \left(\overline{\hat{H}\Psi}\right)\Psi\ga^\nu.
\eeq
Therefore
\beq
i\lambda\p_\mu \left(\overline{\Psi}\ga^\mu\Psi\ga^\nu \right)  = -\frac{\lambda^2}{2} \left(
\overline\Psi\De_q\Psi - \De_q\overline{\Psi}\Psi\right)\ga^\nu = - \frac{\lambda^2}{2} \left[ \p_a(\overline{\Psi}\p^a\Psi) - \p_a(\p^a\overline{\Psi}\Psi)\right]\ga^\nu .
\eeq
Thus
we have a conserved (Clifford-algebra-valued) current
\beq\label{eq:conscurr}
\p_\mu \left(\overline{\Psi}\ga^\mu\Psi\ga^\nu\right)+ \p_a \left(\frac{\lambda}{2i} \left( \overline{\Psi}\p^a\Psi - \p^a\overline{\Psi}\Psi\right)\ga^\nu\right) = 0.
\eeq
Let us now define the {\em Riesz tensor} \cite{Rie1946}
\beq\label{eq:Riesz}
T^{\mu \nu} := \left(\overline{\Psi}\ga^\mu\Psi\ga^\nu\right)_\stz{sc} = \frac{1}{4}\tr \left(\overline{\Psi}\ga^\mu\Psi\ga^\nu\right),
\eeq
and a current defined on field configuration space $\cC$
\beq\label{def:K}
K^{a\nu} := \frac{\lambda}{2i}\left[\left( \overline{\Psi}\p^a\Psi - \p^a\overline{\Psi}\Psi\right)\ga^\nu\right]_\stz{sc}.
\eeq
From \refeq{eq:conscurr} we then have
\beq\label{eq:RieszCons}
\p_\mu T^{\mu\nu} + \p_a K^{a\nu} = 0, \qquad \nu = 0,\dots,3.
\eeq
Our proposed guiding equation will be a consequence of the above conservation laws.  Note however that \refeq{eq:RieszCons} in its current form is still not suitable for our purpose since what we actually need is a positive density that satisfies a continuity equation.  The key observation is that \refeq{eq:RieszCons} is not just one but {\em four} independent conservation laws, and we can use these four to construct a single conserved {\em timelike and future-directed} current, whose time component will give \stznew{rise to an} equivariant density, \stznew{at least when $\Psi$ is a plane wave.}

But before we do that, let us show that \refeq{eq:RieszCons} can already be used to obtain a guiding law for the covariant field.  First we need to recall an important property of the Riesz tensor:
\begin{proposition}\label{prop:DEC}
 $T^{\mu\nu}$ is a symmetric rank-two tensor and it satisfies the Dominant Energy Condition, i.e. 
\begin{enumerate}
\item $T^{\mu}_{\nu}Y^\nu$ is \stz{future-directed and} causal (i.e. either timelike or null) whenever $Y^\nu$ is \stz{future-directed and} causal ;
\item $T^{\mu\nu}Y_\mu Y_\nu \geq 0$ whenever the vectorfield $\bY$ is 
\stz{time-like}.
\end{enumerate}
\end{proposition}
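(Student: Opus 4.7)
The plan is to compute $T^{\mu\nu}$ explicitly in closed form, first as a function of the components $\psi,\psi_\alpha$ of the decomposition $\Psi = \psi I + \psi_\alpha \ga^\alpha$, and then after substitution of the Kanatchikov polar ansatz, so that symmetry and the dominant energy condition can both be read off directly. The whole computation is algebraic; no PDE information about $\Psi$ is needed.

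As a first step I would use $\ga^0(\ga^\mu)^\dag\ga^0 = \ga^\mu$ to observe that $\overline\Psi = \bar\psi I + \bar\psi_\alpha \ga^\alpha$, expand the product $\overline\Psi\ga^\mu\Psi\ga^\nu$, and take the scalar part via the standard Dirac trace identities $\tr(\ga^\alpha\ga^\beta) = 4\eta^{\alpha\beta}$, $\tr(\ga^\alpha\ga^\beta\ga^\gamma) = 0$, and $\tr(\ga^\alpha\ga^\mu\ga^\beta\ga^\nu) = 4(\eta^{\alpha\mu}\eta^{\beta\nu} - \eta^{\alpha\beta}\eta^{\mu\nu} + \eta^{\alpha\nu}\eta^{\mu\beta})$. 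Only the even-$\ga$ terms survive, producing
\[
T^{\mu\nu} = \big(\bar\psi\psi - \bar\psi_\alpha\psi^\alpha\big)\eta^{\mu\nu} + \bar\psi^\mu\psi^\nu + \bar\psi^\nu\psi^\mu,
\]
from which symmetry in $(\mu,\nu)$ is manifest and reality follows from a one-line complex-conjugation check.

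For the dominant energy condition I would next insert the Kanatchikov polar decomposition \refeq{eq:KanatDecomp} under the authors' stated reality assumption on $R$ and $S^\mu$ and the timelike assumption on $\bS$, so that \refeq{eq:cossin} collapses the above formula to
\[
T^{\mu\nu} = \alpha\,\eta^{\mu\nu} + \beta\,u^\mu u^\nu, \qquad \alpha := R^2\cos(2\zeta/\la),\ \ \beta := 2R^2\sin^2(\zeta/\la),
\]
with $u^\mu := S^\mu/\zeta$ satisfying $u\cdot u = 1$. The three structural identities $\alpha + \beta = R^2\geq 0$, $|\alpha| \leq R^2$ (from $|\cos|\leq 1$), and $0\leq\beta\leq 2R^2$ do all the subsequent work. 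Indeed, from this form one reads off that $T^\mu{}_\nu$ has $u$ as a timelike eigenvector with eigenvalue $R^2$, while the three-dimensional $u$-orthogonal subspace is an eigenspace with eigenvalue $\alpha$; the timelike eigenvalue therefore dominates the spacelike ones in absolute value, which is the standard algebraic criterion for DEC.

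To verify the two parts directly: for (ii), using $\cos 2x = 1 - 2\sin^2 x$ gives $T^{\mu\nu}Y_\mu Y_\nu = R^2(Y\cdot Y) + \beta\big[(u\cdot Y)^2 - Y\cdot Y\big]$, and both summands are non-negative for timelike $Y$ by the reverse Cauchy--Schwarz inequality $(u\cdot Y)^2 \geq Y\cdot Y$ valid when $u$ is unit timelike. For (i), the vector $Z^\mu := T^\mu{}_\nu Y^\nu = \alpha Y^\mu + \beta(u\cdot Y)u^\mu$ satisfies $Z\cdot Z = \alpha^2 (Y\cdot Y) + \beta(2R^2 - \beta)(u\cdot Y)^2 \geq 0$, proving causality, and $u\cdot Z = (\alpha + \beta)(u\cdot Y) = R^2(u\cdot Y) \geq 0$, which places $Z$ in the same closed future cone as $u$ and $Y$ and hence in the future. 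The principal technical nuisance is the bookkeeping in the first step, tracking the sixteen terms of $\overline\Psi\ga^\mu\Psi\ga^\nu$ through the Clifford trace reductions with consistent index placement; once the closed-form expression for $T^{\mu\nu}$ is in hand, both symmetry and DEC are short consequences of elementary Minkowski inequalities together with the built-in bound $|\cos(2\zeta/\la)|\leq 1$ inherent in the polar ansatz.
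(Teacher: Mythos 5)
Your proof takes essentially the same route as the paper's: compute $T^{\mu\nu}$ in closed form from the Clifford decomposition, insert the polar ansatz to obtain the perfect-fluid form $T^{\mu\nu} = \rho\,u^\mu u^\nu + \rho\cos(2\zeta/\lambda)\,h^{\mu\nu}$ (equivalently your $\alpha\,\eta^{\mu\nu}+\beta\,u^\mu u^\nu$), read off that $u$ is a timelike eigenvector with eigenvalue $\rho$ and the spacelike eigenvalue is $\rho\cos(2\zeta/\lambda)$, and conclude from $|\cos|\leq 1$. The only difference is that the paper stops at the eigenvalue observation and leaves the verification of (i) and (ii) implicit, whereas you carry out the short Minkowski-inequality checks explicitly (reverse Cauchy--Schwarz for (ii), and $Z\cdot Z\geq 0$ together with $u\cdot Z = \rho\,(u\cdot Y)\geq 0$ for (i)); this is a welcome amount of extra detail, correct as stated.
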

\begin{proof}
To see this, we first compute the Riesz tensor \refeq{eq:Riesz} using Kanatchikov's decomposition \refeq{eq:KanatchDecomp}. From the definition of the Dirac adjoint \refeq{eq:DiracAdjoint}, we have 

\beq
\overline{\Psi} =\gamma^{0}\Psi^{\dagger}\gamma^{0}=  \gamma^{0}\left[\psi^{\star}I+\left(\psi_{\mu}\gamma^{\mu}\right)^{\dagger}\right]\gamma^{0}\\
  = \left[\psi^* I+\psi_{\mu}^{\ast}\gamma^{\mu}\right].
\eeq
So the Riesz tensor becomes 
\begin{eqnarray*}
T^{\mu\nu} & = & \frac{1}{4}\mathrm{tr}\left[\overline{\Psi}\gamma^{\mu}\Psi\gamma^{\nu}\right]\\
 & = &\left(|\psi|^{2}-\psi_{\alpha}^{\ast}\psi^{\alpha}\right)\eta^{\mu\nu}+\left({\psi^{\ast}}^{\mu}\psi^{\nu}+{\psi^\mu}{\psi^*}^\nu\right).
\end{eqnarray*}
which establishes that $T$ is a symmetric rank-2 tensor. 
Now, using the polar decompositions in \refeq{eq:psipsimu}, the above becomes
\begin{equation}\label{eq:KanatchDecompRiesz}
T^{\mu\nu}=\rho\left[\cos^{2}\left(\frac{\zeta}{\lambda}\right)-\sin^{2}\left(\frac{\zeta}{\lambda}\right)\right]\eta^{\mu\nu}+2\rho\sin^{2}\left(\frac{\zeta}{\lambda}\right)u^{\mu}u^{\nu}.
\end{equation}
Recall that $u^\mu$ is a unit timelike vectorfield, $u_\mu u^\mu = 1$.  It follows that $T^\mu_\nu u^\nu = \rho u^\mu$, i.e., $u^\mu$ is an eigenvector of $T^{\mu\nu}$ with eigenvalue $\rho$.  Let $h^{\mu\nu} := \eta^{\mu\nu} - u^\mu u^\nu$ denote the projection onto the orthogonal complement of $u^\mu$ in the tangent space at a point $x$ is spacetime.  Then
\beq\label{eq:Tmunufluid}
T^{\mu\nu} = \rho u^\mu u^\nu  + \rho \cos\left(2\frac{\zeta}{\lambda}\right) h^{\mu\nu} .
\eeq
It follows that the other three eigenvalues of $T^\mu_\nu$ are all equal to $\rho \cos\left(2\frac{\zeta}{\lambda}\right)$, with eigenvectors that are orthogonal to $u^\mu$.  
The conclusion then follows since cosine in absolute value is less than or equal to one.
\end{proof}
\begin{remark}
In fact we can see that, as a linear transformation, $T^\mu_\nu$ is (generically) invertible, with
\beq
{T^{-1}}^\mu_\nu = \frac{1}{\rho}\sec\left(2\frac{\zeta}{\lambda}\right)\left\{ \de^\mu_\nu - 2 \sin^{2}\der{\left(\frac{\zeta}{\lambda}\right)} u^\mu u_\nu \right\}.
\eeq
\end{remark}

\subsection{A proposed guiding law for field beables}
 Let $t$ be a time function on \stz{the Minkowski} spacetime $\cM$ and let $\Si_t$ denote the foliation of spacetime by constant $t$-slices.  Thus $\Si_t$ are spacelike for all $t$ and $\cM = \cup_{t\in\Rset} \Si_t$. Let $\shadibs = (x^1,x^2,x^3)$ denote an arbitrary system of coordinates on $\Si_0$.  Thus $(t,\shadibs)$ is a coordinate system on spacetime.  Let $S_T := \cup_{0\leq t\leq T}\Si_t$ denote a slab in spacetime. 
 Thus its boundary $\p S_T = \Si_T - \Si_0$ is the union of two Cauchy surfaces.  

Let $\cH$ denote the space of wave functions $\Psi = \Psi(t,\shadibs,q)$ such that $\Psi$ is continuous in $t$ and $\Psi(t,\cdot,\cdot)$ is square-integrable on $\Rset^3\times\Rset^n$, i.e.
\beq
\|\Psi(t,\cdot,\cdot)\|_{L^2}^2 := \int_{\Rset^3}\int_{\Rset^n} \left(\Psi^\dagger \Psi\right)_\stz{sc} d^nq d^3\shadibs < \infty .
\eeq
In other words $\cH = C^0\stz{\left(\RR,L^2(\Rset^3\times\Rset^n)\right)}$.

We are now ready to use the conservation law \refeq{eq:RieszCons} to construct 
\stznew{a general guiding equation for fields, and an}
 equivariant density \stznew{in the special case of plane-wave solutions of Kanatchikov's equation, in analogy with how we did so in the case of particles and Schr\"odinger's equation (though of course there we didn't need to assume plane-wave wave functions).}  Let $\bX$ be any fixed, timelike future-directed, covariantly constant vectorfield on Minkowski spacetime, i.e $\nab_\mu X_\nu = 0$.  In subsection \stz{\ref{subsec:X}} we will show how to find such an $\bX$ that is purely determined by the wave function $\Psi$.  Let us define the currents
\beq\label{eq:jxvec}
J_\bX^\mu := T^{\mu}_\nu X^\nu,\qquad K_\bX^a := K^a_\nu X^\nu.
\eeq
 Consider the vectorfield $\bY:\cC\to\cC$ on field configuration space as
\beq\label{eq:Yvectorfield}
\bY := J_\bX^\mu \frac{\p}{\p x^\mu} + K_\bX^a \frac{\p}{\p q^a}.
\eeq
\stz{Later on we will establish that} this vectorfield is divergence free.

Let $\Om_0\subset \cN$ be a bounded domain with smooth boundary in the target space $\cN$ (where the field takes its values.)  Consider the {\em flow} $\Phi_\tau:\cC \to \cC$ on the field configuration space  generated by the vector field $\bY$: 
\beq
\Phi_\tau(x_0,q_0) = (x(\tau),q(\tau))
\eeq
where $x(\tau),q(\tau)$ solve the following ODE initial value problem
\beq\label{ODEs}
\left\{\begin{array}{rcl} \dot{x}^\mu(\tau) & = & J_\bX^\mu(x(\tau),q(\tau))\\
\dot{q}^a(\tau) & = & K_\bX^a (x(\tau),q(\tau)) 
\end{array} \right.\qquad ; \qquad
\left\{\begin{array}{rcl}
x^\mu(0) & = & x_0^\mu \\ q^a(0)  & = & q_0^a
\end{array} \right. .
\eeq  
Furthermore, since $J_\bX^\mu$ is timelike and future directed, $J_\bX^0 \geq \|J_\bX^i\|$, the above flow is equivalent to the temporal flow $\Phi_t: \Si_0\times \cN \to \Si_t \times \cN$ given by the ODE
\beq
\left\{\begin{array}{rcl} \frac{ds^i}{dt} & = & \frac{J_\bX^i}{J_\bX^0}(\shadibs(t),q(t))\\[5pt]
\frac{dq^a}{dt} & = & \frac{K_\bX^a}{J_\bX^0} (\shadibs(t),q(t)) 
\end{array} \right.\qquad
\left\{\begin{array}{rcl}
s^i(0) & = & s_0^i \\ q^a(0)  & = & q_0^a
\end{array} \right. .
\eeq  
  Let us then define the domain $\cD$ in field configuration space
\beq
\cD := \bigcup_{0\leq t\leq T} \Phi_t(\Si_0\times\Om_0) .
\eeq
Its boundary therefore consists of three hypersurfaces
\beq
\p \cD =  (\Si_0\times\Om_0) \bigcup \Phi_T(\Si_0\times\Om_0) \bigcup\left(\cup_{0\leq t\leq T} \Phi_t(\Si_0\times\p \Om_0)\right) =: \cT \cup \cB \cup \cL
\eeq
which we suggestively call the ``top'', the ``bottom", and the ``lateral" part of the boundary.

Hence
\beq\label{eq:divthm}
0 = \int_\cD \mbox{div}\bY = \int_{\p \cD} \bY \cdot \bn_{\p \cD} d\si = 
\int_{\cT} \bY \cdot \bn_{\cT} - \int_{\cB} \bY \cdot \bn_\cB + \int_\cL \bY \cdot \bn_\cL d\si .
\eeq
Since by definition the flow $\Phi_t$ is at every point tangential to the lateral surface $\cL$, the last integral in the above is zero, and we have
\beq\label{eq:TopEqBot}
\int_{\cT} \bY \cdot \bn_{\cT} = \int_{\cB} \bY \cdot \bn_\cB .
\eeq

Let $\phi:\cM \to \cN$, $x\mapsto q=\phi(x)$ be any covariant field beable defined on $\cM$.  Obviously
\beq
\frac{dq}{d\tau} = \frac{\p \phi^a}{\p x^\mu} \frac{d x^\mu}{d \tau}.
\eeq
Thus if we want the flow generated by the field beable to be embedded within the flow of the vector field $\bY$, and hence consistent with \refeq{eq:TopEqBot}, by \refeq{ODEs} we need
\beq\label{eq:ge}
\boxed{J_\bX^\mu(x,\phi(x))\p_\mu \phi^a (x) - K_\bX^a(x,\phi(x)) = 0}.
\eeq

We may view the above as a partial differential equation for $\phi(x)$:
Given a solution $\Psi(x,q)$ of the Kanatchikov system, the above is a first-order quasilinear PDE for the unknowns $\phi^a(x)$.  To see this clearly, let us go into a coordinate frame in which $\bX = (1,0,\dots,0)$. Denoting the corresponding coordinates by $x = (t,\shadibs)$, the above then becomes
\beq\label{eq:QLPDE1}
A(t,\shadibs,\phi)\p_t \phi^a +B^j(t,\shadibs,\phi) \p_j \phi^a +C^a(t,\shadibs,\phi) = 0,
\eeq
where
\beq
A(x,q) = T^0_0(x,q),\qquad B^j(x,q) = T^j_0(x,q),\qquad C^a(x,q) = - K^a_0(x,q).
\eeq
It is well-known that a PDE of the type \refeq{eq:QLPDE1} is locally solvable using method of characteristics: one thinks of the solution $q^a = \phi^a(t,\shadibs)$ as a surface $F^a(t,\shadibs,q) = q^a - \phi^a(t,\shadibs) = 0$ in the configuration space $\cC = \cM\times\cN$.  Then the equation \refeq{eq:QLPDE1} expresses the fact that the vectorfield $(A,B^j,C^a)$ is tangential to this surface.  If we therefore parametrize this surface by parameters $(\tau,\xi^j)$ and denote differentiation with respect to $\tau$ (at constant $\xi$) by a dot, we obtain the system of ODEs along the characteristics:
\beq \label{eq:charODEs}
\dot{t} = A(t,\shadibs,q),\qquad
\dot{s}^j = B^j(t,\shadibs,q),\qquad
\dot{q}^a = C^a(t,\shadibs,q) .
\eeq
Given initial values for the field $\phi$ on the Cauchy surface $t=0$, i.e. $\phi(0,\shadibs) = \phi_0(\shadibs)$ one can then pose a family of initial value problems for the above ODE system by setting, for each $\xi$,
\beq 
t(0,\xi) = 0,\qquad s^j(0,\xi) = \xi^j, \qquad q^a(0,\xi) = \phi^a_0(\xi).
\eeq
Evidently, this has a local solution in a neighborhood of the Cauchy surface, thereby determining the evolution of the field $\phi(t,\shadibs)$.  
We thus propose \refeq{eq:ge} as the guiding law for the field beable. 

\stz{
It is worth emphasizing that, unlike the usual deBroglie-Bohm guiding law of particle mechanics, \refeq{eq:ge} is a PDE, not an ODE.  In particular, the said characteristics may intersect after a short time (think of Burgers' equation), causing a shock formation.  Thus going beyond the local solvability stated here is a nontrivial task and presumably depends heavily on the choice of the action for the classical field one starts with.  
}
\subsection{The guiding law obeys Einstein locality} 

We now address whether or not the 
evolution equation \refeq{eq:ge} paired with the \stz{Kanatchikov} equations  implies 
Einstein locality for the field beable $\phi$. 
First we recall that
\refeq{eq:ge}, which we remind ourselves is required for equivariance to hold, is a quasilinear PDE, since $J_{\mathbf{X}}^{\mu}\left(x,\phi(x)\right)$ depends on
both $x$ and $\phi$ but no spacetime derivatives of $\phi$. 
As a result, \refeq{eq:ge} can be equivalently written as a system of ODEs along characteristics, as in \refeq{eq:charODEs}.

Moreover, since $J_{\mathbf{X}}^{\mu}\left(x,\phi(x)\right)$ is timelike future-directed, these equations also imply that the domain of dependence of a solution
$\left(t,\shadibs,\phi(t,\shadibs)\right)$ is just the backwards light cone of the spacetime point $(t,\shadibs)$, and the domain of influence is the forward light cone.
 So \refeq{eq:ge} yields a locally causal evolution of $\phi$.
 
To the extent that our theory is Einstein local, it is expected to be physically and empirically inequivalent to textbook quantum field theory and extant versions of de Broglie-Bohm quantum field theory \cite{WardQFT10}, both of which violate Einstein locality \cite{WardQFT10}, \cite{Bell}. In future work, we will address the question of whether our theory can be generalized to violate Einstein locality while preserving Lorentz-covariance at the level of both the wave equation and the guiding equation, and at the same time preserving equivariance of an associated probability distribution.

\subsection{Construction of a distinguished vector field}\label{subsec:X}
We now show how to find a vectorfield $\bX_\Psi$ that is constructed out of the wave function $\Psi$, or more precisely, depends only on the initial values of the wave function $\Psi$ on the initial hypersurface $\Si_0$. 

The construction of this vector field follows the same procedure as in \cite{KTZ2018}.  
Let
\beq\label{eq:xtilda}
\tilde{X}^\mu(t) := \int_{\Rset^n}\int_{\Si_t} T^{\mu\nu}(t,\shadibs,q) n_\nu d^3\shadibs d^nq .
\eeq
Note that $T^{\mu\nu}$ is quadratic in $\Psi$ and we have assumed that $\Psi(t,\cdot,\cdot)\in L^2(\Rset^3\times\Rset^n)$ so that the above integral is finite.  
\stz{Also note that plane-wave solutions of the form \refeq{planewave} are not integrable in space, so that $\tilde{X}$ is not defined for them.}

Let us integrate \refeq{eq:RieszCons} on $S_t\times\Rset^n$, for fixed $t>0$, and use the divergence theorem:
\beq
0 = \int_{\Rset^n}\int_{S_t} \left(\p_\mu T^{\mu\nu} + \p_a K^{a\nu}\right) d^3x d^nq = \tilde{X}^\nu(t) - \tilde{X}^\nu(0) + \lim_{R\to\infty} \int\limits_{|\shadibs|=R}\int\limits_{|\bq|=R} \left(T^{\mu\nu}N_\mu + K^{a\nu}N'_a\right) d\si .
\eeq
In the surface integral term, $N_\mu dx^\mu + N'_a dq^a$ is the unit normal covector to the lateral boundary.  This term is zero provided $\Psi$ has sufficient decay at infinity.  In that case it would \stz{follow} that $\tilde{X}^\nu(t) = \tilde{X}^\nu(0)$, i.e. that $\tilde{\bX}$ is a constant vectorfield defined on $\cM$.  

It thus follows (for example using the usual density argument that smooth compactly supported functions are dense in $\cH$) that the vectorfield $\tilde{\bX}_\Psi$ thus constructed is a future-directed, causal, and constant vectorfield everywhere in $\cM$.  Moreover $\tilde{\bX}_\Psi$ clearly depends only on the initial values of the wave function, i.e. on $\left.\Psi\right|_{\Si_0}$.  For {\em generic} initial values, therefore, $\tilde{\bX}_\Psi$ will be {\em timelike}, so that it has a nonzero length, and thus can be normalized in the following way:
\beq\label{eq:xfromxtilda}
X_\Psi^\nu := \frac{\tilde{X}_\Psi^\nu}{\eta(\tilde{X}_\Psi,\tilde{X}_\Psi)}.
\eeq
Note that $\bX_\Psi$ thus defined is not a unit vector.  Let $\hat{\bX}_\Psi$ denote the unit vector in the direction of $\bX_\Psi$.
We now define the currents $\bJ_\Psi$ and $\bK_\Psi$ to be
\beq\label{eq:JfromXandT}
J_\Psi^\mu := T^{\mu}_\nu X_\Psi^\nu,\qquad K_\Psi^a := K^a_\nu X_\Psi^\nu.
\eeq
It follows that $J_\Psi^\mu\frac{\p}{\p x^\mu} + K_\Psi^a\frac{\p}{\p q^a}$ is a conserved current on the field configuration space $\cC = \cM\times\cN$:
\beq\label{eq:conscurrJK}
\p_\mu J_\Psi^\mu + \p_a K_\Psi^a  = 0.
\eeq

Under this assumption, \refeq{eq:TopEqBot} would then imply:
\beq\label{eq:TopBot}
\int_{\cT} \bY \cdot \bn_{\cT} = \int_{\cB} \bY \cdot \bn_\cB = \int_{\Om_0}\int_{\Si_0} T(\bX_\Psi,\bn_{\Si_0}) d^3\shadibs d^nq.
\eeq
This relation also holds if we replace the general unit vectorfield $\bn_{\Si_0}$ by the fixed unit vectorfield $\hat{\bX}_\Psi$ which defines a foliation of hyperplanes orthogonal to it. With this replacement, we may define \stz{the quantity
\beq\label{eq:varrhoXX}
\varrho_{\Si_\bX}^{\phantom{X}} (x,q) := T(\hat{\bX},\hat{\bX}) = \frac{1}{\|\bX\|}T(\bX_\Psi,\hat{\bX}_\Psi),
\eeq
}
where $\Si_\bX$ is a leaf of the foliation determined by $\bX_\Psi$ and contains the spacetime point $x$. By Prop.~\ref{prop:DEC} it follows that 
$\varrho_{\Si_\bX}^{\phantom{X}} $ is non-negative. Furthermore, it follows from \refeq{eq:xtilda} and \refeq{eq:xfromxtilda} that \stz{we have
\beq
\int_{\Rset^n}\int_{\Si_\bX} \varrho_{\Si_\bX}^{\phantom{X}} (x,q) d^3\shadibs d^nq = \left(\int_{\Rset^n}\int_{\Si_\bX} T_{\mu\nu} \hat{X}_\Psi^\nu \hat{X}_\Psi^\mu d^3\shadibs d^nq\right) = \frac{1}{\|\bX\|}\eta(\tilde{\bX},\bX) = \frac{1}{\|\bX\|} <\infty,
\eeq
so long as $\bX$ is strictly timelike. Thus for all $x\in\Si_\bX$, $\varrho_{\Si_\bX}^{\phantom{X}}(x,\cdot) $ is a finite measure on the space $\cN$ of generic field values at $x$.} 
Equation \refeq{eq:TopBot} now implies that %we have an equivariant \stz{measure}, since
\stz{
\beq
\int_{\cT} \bY \cdot \bn_{\cT} = \|\bX\|\int_{\Phi_T(\Si_0\times\Om_0)} \varrho_{\Si_\bX}^{\phantom{X}} (t,\shadibs,q) d^3\shadibs d^nq = \|\bX\|\int_{\Om_0}\int_{\Si_0} \varrho_{\Si_\bX}(0,\shadibs,q) d^3\shadibs
 d^nq.
\eeq
}
There remains the question of how $\varrho_{\Si_\bX}^{\phantom{X}} $ is related to the $\rho$ that appears in \refeq{eq:rho} and subsequently throughout section 4. As we saw, the $\rho$ that appears there admits an interpretation as a \stz{parameter-dependent} probability density function on \stz{$\cN$} 
when we restrict the Kanatchikov system to the special case of plane-waves \stz{(i.e. \refeq{planewave}, with the parameter being $\sigma = n\cdot x$)}. By contrast, $\varrho_{\Si_\bX}$ is \stz{an $x$-dependent finite measure on $\cN$ in the general case, so long as $\Psi$ is square integrable on $\Sigma_\bX\times\cN$ (which excludes plane waves.)} In fact there exists a more general relationship between $\varrho_{\Si_\bX}^{\phantom{X}}$ and $\rho$\stz{, namely}
\stz{
\begin{proposition}
With the above definitions, we have
$$ \varrho_{\Sigma_{\mathbf{X}}}^{\phantom{X}} 
\geq 
\rho,
$$
with the equality achieved (in a limiting sense) whenever the underlying wavefunction $\Psi$ is a plane-wave solution of Kanatchikov's equation.
\end{proposition}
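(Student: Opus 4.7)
The plan is to use the explicit form of the Riesz tensor from Kanatchikov's decomposition, established in \refeq{eq:Tmunufluid}, to compute $\varrho_{\Sigma_{\bX}}^{\phantom{X}}$ directly and compare it with $\rho$ term by term. I would begin by substituting the unit vector $\hat{\bX}$ into \refeq{eq:Tmunufluid}, obtaining
\[
\varrho_{\Sigma_{\bX}}^{\phantom{X}} \;=\; T(\hat{\bX},\hat{\bX}) \;=\; \rho\,(\hat{X}\!\cdot\! u)^2 \;+\; \rho\cos\!\bigl(\tfrac{2\zeta}{\lambda}\bigr)\bigl(1-(\hat{X}\!\cdot\! u)^2\bigr),
\]
where I used $\hat{X}_\mu \hat{X}^\mu = 1$ together with the decomposition $h^{\mu\nu}=\eta^{\mu\nu}-u^\mu u^\nu$.

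Next, I would form the difference $\varrho_{\Sigma_{\bX}}^{\phantom{X}} - \rho$ and observe that a direct rearrangement factorizes it as
\[
\varrho_{\Sigma_{\bX}}^{\phantom{X}} - \rho \;=\; \rho\bigl[(\hat{X}\!\cdot\! u)^2 - 1\bigr]\bigl[\,1-\cos\!\bigl(\tfrac{2\zeta}{\lambda}\bigr)\bigr].
\]
Both $u^\mu$ and $\hat{X}^\mu$ are unit future-directed timelike vectors, so the reverse Cauchy--Schwarz inequality for timelike vectors in Minkowski space yields $\hat{X}\!\cdot\! u \geq 1$, with equality iff $\hat{X}=u$; the second bracket is manifestly non-negative; and $\rho\geq 0$. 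Hence the product is non-negative and the inequality $\varrho_{\Sigma_{\bX}}^{\phantom{X}} \geq \rho$ follows. Equality holds pointwise exactly when either $\hat{X}=u$ or $\cos(2\zeta/\lambda)=1$.

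For the plane-wave part of the statement I would exploit the fact that for a solution of the form \refeq{planewave}, the direction field is constant, $u^\mu \equiv n^\mu$. Although $\tilde{\bX}_\Psi$ is not defined for a pure plane wave (as noted after \refeq{eq:xtilda}), I would approximate $\Psi_n$ by a sequence of square-integrable wave packets $\Psi_\epsilon$ sharply peaked in momentum around $n$, compute $\hat{\bX}_{\Psi_\epsilon}^{\phantom{X}}$ via \refeq{eq:xtilda}--\refeq{eq:xfromxtilda}, and argue that $\hat{\bX}_{\Psi_\epsilon}^{\phantom{X}} \to n$ as $\epsilon \to 0$ by inserting the explicit Riesz form \refeq{eq:Tmunufluid} into the integral and using that the transverse contribution is controlled by $\cos(2\zeta/\lambda)h^{\mu\nu}$, which averages to lower order against the peaked profile. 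In this limit $\hat{X}\!\cdot\! u \to 1$ and the first bracket in the factored identity above vanishes, giving equality ``in a limiting sense.'' The main obstacle is precisely this last step: making rigorous the regularization of the non-integrable plane wave and justifying the interchange of limit and integration in the definition of $\tilde{\bX}_{\Psi_\epsilon}^{\phantom{X}}$; by contrast, the algebraic identity and the pointwise inequality are routine consequences of the tensorial structure of $T^{\mu\nu}$ and the reverse Cauchy--Schwarz inequality.
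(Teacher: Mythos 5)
Your pointwise inequality is correct and is essentially the paper's argument: both compute $\varrho_{\Sigma_\bX}=T(\hat{\bX},\hat{\bX})$ from the perfect-fluid form of the Riesz tensor and invoke the reverse Cauchy--Schwarz inequality for timelike vectors. Your rearrangement
\[
\varrho_{\Sigma_{\bX}}^{\phantom{X}} - \rho = \rho\bigl[(\hat{X}\!\cdot u)^2 - 1\bigr]\bigl[1-\cos\!\bigl(\tfrac{2\zeta}{\lambda}\bigr)\bigr]
\]
is equivalent to the paper's expression \eqref{rhovarrho} and makes the non-negativity and the equality condition ($\hat{\bX}=\bu$ or $\cos(2\zeta/\lambda)=1$) slightly more transparent. (Strictly, reverse Cauchy--Schwarz gives $(\hat{X}\cdot u)^2\geq 1$; to upgrade to $\hat{X}\cdot u\geq 1$ you need both vectors future-directed, which holds here but is worth saying.) Where you genuinely diverge from the paper is in the regularization of the plane wave: you smear in momentum space by replacing the single unit vector $n$ with a peaked wave packet $a(n)$ on the mass hyperboloid, then argue $\hat{\bX}_{\Psi_\epsilon}\to n$; the paper instead cuts the plane-wave initial data off spatially outside a ball $B_R(0)\subset\Sigma_0$, uses finite propagation speed so the solution agrees exactly with the plane wave in the domain of influence $D_R\times\cN$, obtains $\hat{\bX}\equiv\bu$ there, and lets $R\to\infty$. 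The spatial cut-off buys the \emph{exact} identity $\hat{\bX}=\bu$ inside $D_R$ (no averaging estimate needed), whereas your wave-packet route requires a quantitative bound showing the transverse $\cos(2\zeta/\lambda)h^{\mu\nu}$ contribution is subleading, which you flag as the main obstacle. Both arguments remain at the level of plausible-but-informal limits; the paper's choice is the easier one to close.
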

}
\begin{proof}
\stz{To see this, first} notice that the Riesz tensor \refeq{eq:Tmunufluid} has the form of the energy-momentum
tensor of a relativistic perfect fluid, where the term proportional to $h^{\mu\nu}$
is the pressure, the term proportional to $u^{\mu}u^{\nu}$ is the energy density, and $u^{\mu}$ plays the role of the fluid 4-velocity.
Expectedly, if we pick the frame in which $u^{\mu}=\left(1,0,0,0\right)$, it follows that 
\begin{equation}
T^{00}=\rho.
\end{equation}
The general relationship between $\varrho_{\Si_\bX}^{\phantom{X}} $ and $\rho$ is found by using \refeq{eq:KanatchDecompRiesz} in \refeq{eq:varrhoXX} to obtain
\bna
\label{rhovarrho}
\varrho_{\Sigma_{\mathbf{X}}}^{\phantom{X}} 
=  T^{\mu\nu}\hat{X}_{\mu}\hat{X}_{\nu} \stz{ = \rho \left(
\left[1-\cos\frac{2\zeta}{\lambda}\right]\eta(\bu,\hat{X})^2 + \cos\frac{2\zeta}{\lambda}\right)
\geq 
\rho.}
\ena
The last inequality is due to the reverse Cauchy--Schwarz inequality that holds between timelike vectors in Minkowski space: $\eta(v,w)^2 \geq \eta(v,v)\eta(w,w)$.
\stz{Furthermore, we can see that the inequality in \refeq{rhovarrho} becomes an equality if $\bu \equiv \hat{\bX}$, which is a constant unit vector, which implies that $\Psi$ was a plane-wave.  In other words,} 
the inequality \stz{in \refeq{rhovarrho}} is \stz{in some sense} saturated \stz{for plane-waves.}

\stz{To make that more precise, let $\Psi$ be a plane-wave solution of Kanatchikov's equation, and let $\zeta,\rho,u$ be as in \refeq{planewaveAnsatz}, so that $u^\mu $ is a constant unit timelike vector. Let $\Sigma_\si := \{x\in\cM\ |\ \bu\cdot x = \sigma\}$ denote the foliation by hyperplanes in $\cM$ that is dual to $\bu$.  Thus for fixed $q\in \cN$, $\rho(\cdot,q)$ is a constant on each leaf $\Si_\si$ of the foliation.   Integrating \refeq{eq:g} on $\cN$ and using the divergence theorem, we obtain that
\beq \p_\si \int_\cN g(\si,q) dq = 0.  \eeq
As a consequence, for fixed $q$, $\rho(\cdot,q)$ is constant everywhere on $\cM$, and therefore it can be normalized so that
\beq \int_\cN \rho(x,q) dq = 1\quad\forall x \in \cM. \eeq
}

\stz{
Now, if we cut off the initial data corresponding to this $\Psi$ outside a large ball $B_R(0) \subset \Sigma_0$, so that it becomes square integrable in space, and solve Kanatchikov's equation with the cut-off data, we obtain a solution $\tilde{\Psi}$ that coincides with $\Psi$ inside $D_R\times\cN$, the domain of influence of $B_R(0)\times \cN$, but will be square integrable in space, and thus it would have a well-defined vectorfield $\bX = \bX_{\tilde{\Psi}}$ and a corresponding $\varrho_{\Sigma_{\bX}}$,  such that $\hat{\bX} \equiv \bu$   in $D_R\times \cN$.   We would then have that, inside $D_R\times\cN$,
\begin{equation}
\varrho_{\Sigma_{\mathbf{X}}}^{\phantom{X}}=T^{\mu\nu}\hat{X}_{\mu}\hat{X}_{\nu} =T^{\mu\nu}u_{\mu}u_{\nu} =\rho,
\end{equation}
so that equality is achieved in \refeq{rhovarrho}.    
The claims then follow when we take the cut-off away by letting $R \to \infty$.
}
\end{proof}
\stz{To complete the proof of Theorem~\ref{thm:guiding}, we define the $x$-dependent  measure $\varrho_x$ in the statement of the Theorem to be $\varrho_{\Sigma_\bX}$ where $\Sigma_\bX$ is the leaf of the foliation generated by the vectorfield $\bX = \bX_{\Psi}$ that includes the spacetime point $x$.}

We can go one step further and ask, when $\Psi$ is a plane-wave, what does the proposed guiding equation \refeq{eq:ge} reduce to?  To answer this question, let us fix a unit timelike direction $u^\mu$ and let $(f,g)$ be a solution of (\ref{eq:f}-\ref{eq:g}).  Let $\zeta,\rho,u$ be as in \refeq{planewaveAnsatz}, and let $\Psi$ be defined via \refeq{eq:psipsimu}. \stz{As we argued in the above, by introducing a cut-off we can arrange it that $\hat{\bX}_\Psi = \bu$ inside a large region in space.} Let us go into a frame where $\bu=(1,0,0,0)$.  We have already seen that $T^0_0 = \rho$.  It also follows that $T^j_0 = 0$ for $j=1,2,3$. We also need to calculate $K_0^a$. We have $\Psi = \sqrt{g}e^{i f \bu\cdot\gamma /\lambda}$, and plugging this into \refeq{def:K} we get
\beq
K_\nu^a(x,q) = \sqrt{g(\bu\cdot x, q)} \p^a f (\bu\cdot x, q) \left((\bu\cdot \gamma)\gamma_\nu\right)_\stz{sc} ,
\eeq
so that $K_0^a = \sqrt{g} \p^a f$.  The guiding law \refeq{eq:ge} now reads
$ \sqrt{g} \p_t \phi^a - \sqrt{g} \p^a f(t,\phi) = 0.$ In other words,
\beq\label{eq:PWguiding}
\p_t \phi^a = \p^a f(t,\phi),
\eeq
\stz{which with hindsight is quite natural and easy to guess from the form of the continuity equation \refeq{eq:g} for plane-wave solutions of Kanatchikov's equation.}

Once $\Psi$ is not a single plane wave, but a superposition of them, since the proposed guiding law is nonlinear (note that the tensors $T^{\mu\nu}$ and $K^a_\nu$ are quadratic in $\Psi$) the interaction of the different modes will contribute to the guiding of the field $\phi$, as was already the case for the deBroglie-Bohm guiding law \refeq{eq:dBB}.  In this \stz{limited} sense our guiding equation is an \stz{analogue}  of \refeq{eq:dBB} \stz{for} covariant fields.

\section{Summary and Outlook}

In this article we have shown that\stz{, in the classical limit,} plane wave solutions of Kanatchikov's relativistic generalization of Schr\"odinger's equation  
\stz{satisfy the same equations as} plane wave solutions
of the DeDonder--Weyl--Christodoulou Hamilton--Jacobi formulation of classical covariant field theory, and that they obey Christodoulou's integrability condition. 
 As far as we know, previous investigations did not pay attention to this constraint. 
  We also formulated a \der{local} guiding law for covariant fields that evolves the actual fields by a solution of Kanatchikov's equation, and we proved the equivariance of
  the associated \stz{evolving measure,} \stznew{when that solution is a plane wave.} 
  
 In future work we hope to extend the class of admissible solutions beyond plane wave category, and to address the issue of the apparent lack of Bell-type nonlocality for our guiding law. 

 Of particular interest is also to consider fields not defined on spacetime but {\em taking values in} spacetime.  Covariant guiding laws for such fields could be used to develop a de Broglie--Bohm type dynamical theory for {\em non-point-like} objects, e.g. ring-like particles \cite{KTZ2016}, or strings evolving in spacetime \cite{WardQFT10}, \cite{Weingard95}.  This aspect will be explored elsewhere.

\section*{Acknowledgments}

We are very grateful to Sheldon Goldstein and the mathematical physics group at Rutgers University for helpful discussions. \stz{We thank the anonymous referees for their detailed reading of the paper and their helpful remarks.} Maaneli Derakhshani gratefully acknowledges funding as a Post-Doctoral Associate in the Department of Mathematics in the School of Arts and Sciences at Rutgers University, from 2019--2020. 

 %\vspace{-.3truecm}

\end{document}